\newenvironment{acknowledgements}{\paragraph{Acknowledgement}}{}
\numberwithin{equation}{section}
\newtheorem{theorem}{Theorem}[section]
\newtheorem{lemma}[theorem]{Lemma}
\newtheorem{proposition}[theorem]{Proposition}
\theoremstyle{definition}
\newtheorem{definition}[theorem]{Definition}
\theoremstyle{remark}
\newtheorem{remark}[theorem]{Remark}
\newcommand\R{{\mathbb R}}
\newcommand\X{{\R^d}}
\newcommand\N{{\mathbb N}}
\newcommand\B{{\mathcal B}}
\newcommand\Bc{\B_{\mathrm{b}}}
\newcommand\Bbs{B_{\mathrm{bs}}}
\renewcommand\L{{\mathcal L}}
\newcommand\K{{\mathcal K}}
\renewcommand\a{{\alpha}}
\newcommand\aC{{\a C}}
\newcommand\La{\Lambda}
\newcommand\la{\lambda}
\newcommand\Ga{\Gamma}
\newcommand\ga{\gamma}
\newcommand\eps{\varepsilon}
\newcommand{\1}{1\!\!1}
\newcommand\n{{|\eta|}}
\newcommand\lv{\left\vert}
\newcommand\rv{\right\vert}
\newcommand\lV{\left\Vert}
\newcommand\rV{\right\Vert}
\newcommand\lu{\left\langle}
\newcommand\ru{\right\rangle}
\newcommand\lluu{\lu\!\!\lu}
\newcommand\rruu{\ru\!\!\ru}
\newcommand\KK{\overline{\K_\aC}}
\newcommand\Ka{\KK}
\newcommand\hT{{\hat{T}}}
\newcommand\hP{{\hat{P}}}
\newcommand\hL{{\hat{L}}}
\newcommand\goto{\rightarrow}
\begin{document}

\title{Correlation~functions evolution for~the~Glauber~dynamics in~continuum}

\author{Dmitri Finkelshtein\thanks{%
Institute of Mathematics, National Academy of Sciences of Ukraine, Kyiv,
Ukraine (\texttt{fdl@imath.kiev.ua}).} \and Yuri Kondratiev\thanks{%
Fakult\"{a}t f\"{u}r Mathematik, Universit\"{a}t Bielefeld, 33615 Bielefeld,
Germany (\texttt{kondrat@math.uni-bielefeld.de})} \and Oleksandr Kutoviy%
\thanks{%
Fakult\"{a}t f\"{u}r Mathematik, Universit\"{a}t Bielefeld, 33615
Bielefeld, Germany (\texttt{kutoviy@math.uni-bielefeld.de}).}}

\maketitle

\begin{abstract}
We construct a correlation functions evolution corresponding to the Glauber
dynamics in continuum. Existence of the corresponding strongly
continuous contraction semigroup in a proper Banach space is shown.
Additionally we prove the existence of the evolution of states and study
their ergodic  properties.
\end{abstract}

\section{Introduction}

Among all birth-and-death Markov processes on configuration spaces
in continuum the Glauber type stochastic dynamics is the object of
permanent  interest for discovering. These dynamics have the given
reversible states which are grand canonical Gibbs measures. This
gives a standard way to construct properly associated stationary
Markov processes using the corresponding (non-local) Dirichlet forms
related to the considered Markov generators and Gibbs measures.
These processes describe the equilibrium Glauber dynamics which
preserve the initial Gibbs state in the time evolution, see, e.g.,
\cite{KoLy}, \cite{KoLyRo}, \cite{KoMiZh}, \cite{FiKoLy}. Note that,
in applications, the time evolution of initial state is the subject
of the primary interest. Therefore, we understand the considered
stochastic (non-equilibrium) dynamics as the evolution of initial
distributions for the system. Actually, the Markov process (provided
it exists) itself gives a general technical equipment to study this
problem. However, we note that the transition from the micro-state
evolution corresponding to the given initial configuration  to the
macro-state dynamics is the well developed concept in the theory of
infinite particle systems. This point of view appeared initially in
the framework of the Hamiltonian dynamics of classical gases, see,
e.g., \cite{DoSiSu}.

The study of the non-equilibrium Glauber dynamics needs construction
of the time evolution for a wider class of initial measures. The
lack of the general Markov processes techniques for the considered
systems makes necessary to develop alternative approaches to study
the state evolutions in the Glauber dynamics. The approach realized
in \cite{KoKutMi}, \cite{KoKutZh}, \cite{FKKZh} is probably the only known at the
present time. The description of the time evolutions for measures on
configuration spaces in terms of an infinite system of evolutional
equations for the corresponding correlation functions was used
there. The latter system is a Glauber evolution's analog of the
famous BBGKY-hierarchy for the Hamiltonian dynamics.

Here we extend constructive approach developed in \cite{FKKZh} to correlation function
evolution of the Glauber dynamics in
continuum. We describe a reasonable Banach space where the evolution problem
can be solved. Moreover, we construct an explicit approximation  by bounded operators of the corresponding
evolutional semigroup. We prove
that functions in this evolution stay correlation functions of some measures
(states) on configuration spaces; this means that we show the
existence of states evolution. At the end we obtain the ergodic properties
of the state evolution.

\section{Basic facts and notation}

Let ${\mathcal{B}}({{\mathbb{R}}^d})$ be the family of all Borel sets in ${{%
\mathbb{R}}^d}$, $d\geq 1$; ${\mathcal{B}}_{\mathrm{b}}
({{\mathbb{R}}^d})$ denote the system of all bounded sets in
${\mathcal{B}}({{\mathbb{R}}^d})$.

The configuration space over space ${{\mathbb{R}}^d}$ consists of all
locally finite subsets (configurations) of ${{\mathbb{R}}^d}$, namely,
\begin{equation}  \label{confspace}
\Gamma =\Gamma_{{\mathbb{R}}^d} :=\Bigl\{ \gamma \subset {{\mathbb{R}}^d}
\Bigm| |\gamma \cap \Lambda |<\infty, \text{ for all } \Lambda \in {\mathcal{%
B}}_{\mathrm{b}} ({{\mathbb{R}}^d})\Bigr\}.
\end{equation}
The space $\Gamma$ is equipped with the vague topology, i.e., the minimal
topology for which all mappings $\Gamma\ni\gamma\mapsto \sum_{x\in\gamma}
f(x)\in{\mathbb{R}}$ are continuous for any continuous function $f$ on ${{%
\mathbb{R}}^d}$ with compact support; note that the summation in
$\sum_{x\in\gamma} f(x)$ is taken over only
finitely many points of $\gamma$ which belong to the support of $f$. In \cite%
{KoKut}, it was shown that $\Gamma$ with the vague topology may be
metrizable and it becomes a Polish space (i.e., complete separable
metric space). Corresponding to this topology, Borel $\sigma
$-algebra ${\mathcal{B}}(\Gamma )$ is the smallest $\sigma $-algebra
for which all mappings $\Gamma \ni \gamma \mapsto |\gamma_ \Lambda |\in{%
\mathbb{N}}_0:={\mathbb{N}}\cup\{0\}$ are measurable for any $\Lambda\in{%
\mathcal{B}}_{\mathrm{b}}({{\mathbb{R}}^d})$. Here $\gamma_\Lambda:=\gamma%
\cap\Lambda$, and $|\cdot|$ means the cardinality of a finite set.

The space of $n$-point configurations in an arbitrary
$Y\in{\mathcal{B}}({{\mathbb{R}}^d})$ is defined by
\begin{equation*}
\Gamma^{(n)}_Y:=\Bigl\{  \eta \subset Y \Bigm| |\eta |=n\Bigr\} ,\quad n\in {%
\mathbb{N}}.
\end{equation*}
We set also $\Gamma^{(0)}_Y:=\{\emptyset\}$. As a set,
$\Gamma^{(n)}_Y$ may be identify with the symmetrization of
\begin{equation*}
\widetilde{Y^n} = \Bigl\{ (x_1,\ldots ,x_n)\in Y^n \Bigm| x_k\neq x_l \text{
if } k\neq l\Bigr\} .
\end{equation*}

Hence one can introduce the corresponding Borel $\sigma $-algebra,
which we denote by ${\mathcal{B}}(\Gamma^{(n)}_Y)$. The space of
finite configurations in an arbitrary
$Y\in{\mathcal{B}}({{\mathbb{R}}^d})$ is defined by
\begin{equation*}
\Gamma_{0,Y}:=\bigsqcup_{n\in {\mathbb{N}}_0}\Gamma^{(n)}_Y.
\end{equation*}
This space is equipped with the topology of disjoint unions.
Therefore, one can introduce the corresponding Borel $\sigma
$-algebra ${\mathcal{B}} (\Gamma _{0,Y})$. In the case of
$Y={{\mathbb{R}}^d}$ we will omit the index $Y$ in
the notation, namely, $\Gamma_0:=\Gamma_{0,{{\mathbb{R}}^d}}$, $%
\Gamma^{(n)}:=\Gamma^{(n)}_{{{\mathbb{R}}^d}}$.

The restriction of the Lebesgue product measure $(dx)^n$ to $\bigl(%
\Gamma^{(n)}, {\mathcal{B}}(\Gamma^{(n)})\bigr)$ we denote by $m^{(n)}$. We
set $m^{(0)}:=\delta_{\{\emptyset\}}$. The Lebesgue--Poisson measure $%
\lambda $ on $\Gamma_0$ is defined by
\begin{equation}  \label{LP-meas-def}
\lambda :=\sum_{n=0}^\infty \frac {1}{n!}m^{(n)}.
\end{equation}
For any $\Lambda\in{\mathcal{B}}_{\mathrm{b}}({{\mathbb{R}}^d})$ the
restriction of $\lambda$ to $\Gamma _\Lambda:=\Gamma_{0,\Lambda}$ will be
also denoted by $\lambda $. The space $\bigl(
\Gamma, {\mathcal{B}}(\Gamma)\bigr)$ is the projective limits of the family
of spaces $\bigl\{( \Gamma_\Lambda, {\mathcal{B}}(\Gamma_\Lambda))\bigr\}%
_{\Lambda \in {\mathcal{B}}_{\mathrm{b}} ({{\mathbb{R}}^d})}$. The Poisson
measure $\pi$ on $\bigl(\Gamma ,{\mathcal{B}}(\Gamma )\bigr)$ is given as
the projective limit of the family of measures $\{\pi^\Lambda \}_{\Lambda
\in {\mathcal{B}}_{\mathrm{b}} ({{\mathbb{R}}^d})}$ where $%
\pi^\Lambda:=e^{-m(\Lambda)}\lambda $ is the probability measure on $\bigl( %
\Gamma_\Lambda, {\mathcal{B}}(\Gamma_\Lambda)\bigr)$. Here $m(\Lambda)$ is
the Lebesgue measure of $\Lambda\in {\mathcal{B}}_{\mathrm{b}} ({{\mathbb{R}}%
^d})$.

For any measurable function $f:{{\mathbb{R}}^d}\rightarrow {\mathbb{R}}$ we
define a \emph{Lebesgue--Poisson exponent}
\begin{equation}  \label{LP-exp-def}
e_\lambda(f,\eta):=\prod_{x\in\eta} f(x),\quad \eta\in\Gamma_0; \qquad
e_\lambda(f,\emptyset):=1.
\end{equation}
Then, by \eqref{LP-meas-def}, for $f\in L^1({{\mathbb{R}}^d},dx)$ we
obtain $e_\lambda(f)\in L^1(\Gamma_0,d\lambda)$ and
\begin{equation}  \label{LP-exp-mean}
\int_{\Gamma_0}
e_\lambda(f,\eta)d\lambda(\eta)=\exp\biggl\{\int_{{\mathbb{R}}^d}
f(x)dx\biggr\}.
\end{equation}

A set $M\in {\mathcal{B}} (\Gamma_0)$ is called bounded if there exists $%
\Lambda \in {\mathcal{B}}_{\mathrm{b}} ({{\mathbb{R}}^d})$ and $N\in {%
\mathbb{N}}$ such that $M\subset \bigsqcup_{n=0}^N\Gamma _\Lambda^{(n)}$.
We will use the following classes of functions on $\Ga_{0}$:
$L_{\mathrm{ls}}^0(\Ga _0)$ is the set of all measurable functions
on $\Ga_0$ which have a local support, i.e. $G\in
L_{\mathrm{ls}}^0(\Ga _0)$ if there exists $\La \in \B_b({\R}^{d})$
such that $G\upharpoonright_{\Ga _0\setminus \Ga _\La }=0$;
$B_{\mathrm{bs}}(\Ga _0)$ is the set of bounded measurable functions
with bounded support, i.e. $G\upharpoonright_{\Ga _0\setminus B}=0$
for some bounded $B\in \B(\Ga_0)$.

Any ${\mathcal{B}}(\Gamma_0)$-measurable function $G$ on $%
\Gamma_0$, in fact, is a sequence of functions $\bigl\{G^{(n)}\bigr\}_{n\in{%
\mathbb{N}}_0}$ where $G^{(n)}$ is a ${\mathcal{B}}(\Gamma^{(n)})$%
-measurable function on $\Gamma^{(n)}$.

On $\Ga $ we consider the set of cylinder functions
$\mathcal{F}_{\mathrm{cyl}}(\Ga )$. These functions are
characterized by the following relation: $F(\ga )=F\upharpoonright
_{\Ga _\La }(\ga _\La )$.

There is the following mapping from $L_{\mathrm{ls}}^0(\Ga _0)$ into ${{%
\mathcal{F}}_{\mathrm{cyl}}}(\Gamma )$, which plays the key role in
our further considerations:
\begin{equation}
KG(\gamma ):=\sum_{\eta \Subset \gamma }G(\eta ), \quad \gamma \in \Gamma,
\label{KT3.15}
\end{equation}
where $G\in L_{\mathrm{ls}}^0(\Ga _0)$, see, e.g.,
\cite{KoKu99,Le75I,Le75II}. The summation in \eqref{KT3.15} is taken
over all finite subconfigurations $\eta\in\Ga_0$ of the (infinite)
configuration $\gamma\in\Ga$; we denote this, by the symbol,
$\eta\Subset\gamma $. The mapping $K$ is linear, positivity
preserving, and invertible, with
\begin{equation}
K^{-1}F(\eta ):=\sum_{\xi \subset \eta }(-1)^{|\eta \setminus \xi |}F(\xi
),\quad \eta \in \Gamma_0.  \label{k-1trans}
\end{equation}
Here and in the sequel inclusions like $\xi\subset\eta$ hold for $%
\xi=\emptyset$ as well as for $\xi=\eta$. We denote the restriction
of $K$ onto functions on $\Gamma_0$ by $K_0$.

For any fixed $C>1$ we consider the following space of
${\mathcal{B}} (\Gamma_0)$-measurable functions
\begin{equation}  \label{norm}
\L _C:=\biggl\{ G:\Gamma_0\rightarrow{\mathbb{R}} \biggm| \|G\|_C:=
\int_{\Gamma_0} |G(\eta)| C^{|\eta|} d\lambda(\eta) <\infty\biggr\}.
\end{equation}
In the sequel, $\mathcal{L}_{C}^{\mathrm{ls}}$ denotes the set $L_{\mathrm{ls}}^0(\Ga _0)\cap\mathcal{L}_{C}$.
The space $\mathcal{L}_{C}$ can be made into a Banach space in a standard way; simply taking the quotient space with respect to the kernel of $\|\cdot\|_C$. To simplify notations, we use the same symbol $\L _C$ for the corresponding Banach space.

A measure $\mu \in {\mathcal{M}}_{\mathrm{fm} }^1(\Gamma )$ is called
locally absolutely continuous with respect to (w.r.t. for short) Poisson measure $\pi$ if for any $%
\Lambda \in {\mathcal{B}}_{\mathrm{b}} ({{\mathbb{R}}^d})$ the projection of
$\mu$ onto $\Gamma_\Lambda$ is absolutely continuous w.r.t. projection of $%
\pi$ onto $\Gamma_\Lambda$. By \cite{KoKu99}, in this case, there
exists a \emph{correlation functional} $k_{\mu}:\Gamma_0 \rightarrow
{\mathbb{R}}_+$ such that for any $G\in B_{\mathrm{bs}} (\Gamma_0)$
the following equality holds
\begin{equation}  \label{eqmeans}
\int_\Gamma (KG)(\gamma) d\mu(\gamma)=\int_{\Gamma_0}G(\eta)
k_\mu(\eta)d\lambda(\eta).
\end{equation}
Restrictions $k_\mu^{(n)}$ of this functional on $\Gamma_0^{(n)}$, $n\in{%
\mathbb{N}}_0$ are called \emph{correlation functions} of the
measure $\mu$. Note that $k_\mu^{(0)}=1$.

We recall now without a proof the partial case of the well-known
technical lemma (cf., \cite{KoMiZh}) which plays very important role
in our calculations.

\begin{lemma}
\label{Minlos} For any measurable function $H:\Gamma_0\times\Gamma_0\times%
\Gamma_0\rightarrow{\mathbb{R}}$
\begin{equation}  \label{minlosid}
\int_{\Gamma _{0}}\sum_{\xi \subset \eta }H\left( \xi ,\eta \setminus \xi
,\eta \right) d\lambda \left( \eta \right) =\int_{\Gamma _{0}}\int_{\Gamma
_{0}}H\left( \xi ,\eta ,\eta \cup \xi \right) d\lambda \left( \xi \right)
d\lambda \left( \eta \right)
\end{equation}
if only both sides of the equality make sense.
\end{lemma}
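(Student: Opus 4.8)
The plan is to unroll the definition \eqref{LP-meas-def} of the Lebesgue--Poisson measure on both sides and thereby reduce \eqref{minlosid} to an elementary combinatorial identity about finite subsets of a finite set. First I would reduce to the case $H\geq 0$: then every integral and every sum below takes values in $[0,+\infty]$, so Tonelli's theorem licenses all the interchanges of sums and integrals that follow; the general, signed case is then recovered by writing $H=H^{+}-H^{-}$, the phrase ``both sides make sense'' being read as the requirement that the (provenly equal) values obtained for $H^{+}$ and $H^{-}$ are not both $+\infty$.

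Now identify a function on $\Gamma_0$ with the sequence of its restrictions to the $\Gamma^{(n)}$ and write a generic point of $\Gamma^{(n)}$ as $\{x_1,\dots,x_n\}$. By \eqref{LP-meas-def} the left-hand side of \eqref{minlosid} equals
\[
\sum_{n=0}^{\infty}\frac{1}{n!}\int_{(\mathbb{R}^d)^n}\ \sum_{\xi\subset\{x_1,\dots,x_n\}} H\bigl(\xi,\{x_1,\dots,x_n\}\setminus\xi,\{x_1,\dots,x_n\}\bigr)\,dx_1\cdots dx_n ,
\]
where the diagonals $\{x_k=x_l\}$ are Lebesgue-null and so are harmless. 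Group the inner sum according to $k:=|\xi|$. Since Lebesgue measure on $(\mathbb{R}^d)^n$ is invariant under permutations of coordinates, each of the $\binom{n}{k}$ subsets of size $k$ contributes the same integral as the distinguished choice $\xi=\{x_1,\dots,x_k\}$. Using $\frac{1}{n!}\binom{n}{k}=\frac{1}{k!\,(n-k)!}$ and reindexing by $n=k+m$, the $n$-fold integral splits as a product of an integral over $(x_1,\dots,x_k)$ and one over $(x_{k+1},\dots,x_{k+m})$ (the mixed diagonals being again null, so that $\{x_1,\dots,x_k\}\cup\{x_{k+1},\dots,x_{k+m}\}$ is a genuine $(k+m)$-point configuration for a.e.\ choice). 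Hence the expression becomes
\[
\sum_{k=0}^{\infty}\sum_{m=0}^{\infty}\frac{1}{k!}\frac{1}{m!}\int_{(\mathbb{R}^d)^k}\int_{(\mathbb{R}^d)^m} H\bigl(\xi,\eta,\xi\cup\eta\bigr)\,dy_1\cdots dy_m\,dx_1\cdots dx_k ,
\]
with $\xi=\{x_1,\dots,x_k\}$ and $\eta=\{y_1,\dots,y_m\}$; applying \eqref{LP-meas-def} twice recognizes this as the right-hand side $\int_{\Gamma_0}\int_{\Gamma_0}H(\xi,\eta,\eta\cup\xi)\,d\lambda(\xi)\,d\lambda(\eta)$.

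The only genuinely technical points are bookkeeping ones. One must check that the right-hand integrand is defined $\lambda\otimes\lambda$-almost everywhere, i.e.\ that $|\xi\cup\eta|=|\xi|+|\eta|$ off a $\lambda\otimes\lambda$-null set; this holds because $\{(\xi,\eta)\in\Gamma_0\times\Gamma_0:\xi\cap\eta\neq\emptyset\}$ is $\lambda\otimes\lambda$-null, by the same coordinate-diagonal argument used above. The remaining work --- justifying that the countable sums may be pulled through the finite-dimensional integrals --- is automatic for $H\geq 0$ by Tonelli and, for signed $H$, is precisely what the hypothesis guarantees; this interchange, rather than any deep idea, is the main thing one has to be careful about.
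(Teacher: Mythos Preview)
Your argument is correct and is exactly the standard direct proof of this identity: unfold $\lambda$ via \eqref{LP-meas-def}, use the permutation invariance of Lebesgue measure to collapse the $\binom{n}{k}$ choices of $\xi$ to a single one, split $\frac{1}{n!}\binom{n}{k}=\frac{1}{k!}\cdot\frac{1}{(n-k)!}$, and re-sum. The null-diagonal remarks and the Tonelli/positive-part reduction are the right technical hygiene.

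There is nothing to compare with in the paper itself: the authors explicitly state the lemma \emph{without proof}, citing it as a well-known technical fact (``We recall now without a proof the partial case of the well-known technical lemma (cf., \cite{KoMiZh})''). Your write-up supplies precisely the omitted argument.
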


\section{Non-equilibrium Glauber dynamics in continuum}

\label{dualconstraction}

Let $\phi:{{\mathbb{R}}^d}\rightarrow{\mathbb{R}}_+:=[0;+\infty)$ be
an even non-negative function which satisfies the following
integrability condition
\begin{equation}  \label{weak_integrability}
C_\phi := \int_{{\mathbb{R}}^d} \bigl(1-e^{-\phi(x)}\bigr) dx < +\infty
\end{equation}
For any $\gamma\in\Gamma$, $x\in{{\mathbb{R}}^d}\setminus\gamma$ we set
\begin{equation}  \label{relativeenergy}
E^\phi(x,\gamma) :=\sum_{y\in\gamma} \phi(x-y) \in [0;\infty].
\end{equation}

Let us define the (pre-)generator of the Glauber dynamics: for any $F\in{{%
\mathcal{F}}_{\mathrm{cyl}}}(\Gamma )$ we set
\begin{align}
(LF)(\gamma):=&\sum_{x\in\gamma} \bigl[F(\gamma\setminus x) -F(\gamma)\bigr]
\label{genGa} \\
&+ z \int_{{{\mathbb{R}}^d}} \bigl[F(\gamma\cup x) -F(\gamma)\bigr]\exp%
\bigl\{-E^\phi(x,\gamma)\bigr\} dx, \qquad \gamma\in\Gamma.  \notag
\end{align}
Here $z>0$ is the \textit{activity} parameter. Note that for any
$F\in{{\mathcal{ F}}_{\mathrm{cyl}}}(\Gamma)$ there exists
$\La\in\Bc(\X)$ such that $F(\gamma\setminus x)=F(\gamma)$
for any $x\in\gamma_{\Lambda^c}$ and $F(\gamma\cup x)=F(\gamma)$ for any $%
x\in\Lambda^c$; note also that $\exp\bigl\{-E^\phi(x,\gamma)\bigr\}\leq 1$; therefore, sum and integral in \eqref{genGa} are finite.

In \cite{FKKZh}, it was shown that the mapping ${\hat{L}}
:=K^{-1}LK$ given on $B_{\mathrm{bs}}(\Gamma_0)$ by the following
expression
\begin{align}  \label{Lhat}
({\hat{L}} G)(\eta) =&- |\eta| G(\eta) \\&+ z \sum_{\xi\subset\eta}\int_{{%
\mathbb{R}}^d} e^{-E^\phi(x,\xi)} G(\xi\cup x)e_\lambda (e^{-\phi (x -
\cdot)}-1,\eta\setminus\xi) dx \notag
\end{align}
is a linear operator on $\L _C$ with the dense domain $D({\hat{L}} )=\L %
_{2C}\subset\L _C$. If, additionally,
\begin{equation}  \label{verysmallparam}
z\leq \min\bigl\{Ce^{-CC_{\phi }} ; \, 2Ce^{-2CC_{\phi }}\bigr\},
\end{equation}
then $\bigl({\hat{L}} , D({\hat{L}} )\bigr)$ is closable linear operator in $%
\L _C$ and its closure (which we denote by $\hat{L}$ also) generates
a strongly continuous contraction semigroup ${\hat{T}} (t)$ on
$\L_C$.

Let us define $d\lambda_C:= C^{|\cdot|} d\lambda$. The
topologically dual space is the space $(\L _C)^{\prime
}=\bigl(L^1(\Gamma_0, d\lambda_C)\bigr)^{\prime} =L^\infty(\Gamma_0,
d\lambda_C)$. The space $L^\infty(\Gamma_0, d\lambda_C)$ is
isometrically isomorphic to the Banach space
\begin{equation*}
{\mathcal{K}}_{C}:=\left\{k:\Gamma_{0}\rightarrow{\mathbb{R}}\,\Bigm| k\cdot
C^{-|\cdot|}\in L^{\infty}(\Gamma_{0},\lambda)\right\}
\end{equation*}
with the norm
$
\|k\|_{{\mathcal{K}}_C}:=\bigl\|C^{-|\cdot|}k(\cdot)\bigr\|_{L^{\infty}(\Gamma_{0},%
\lambda)},
$
where the isomorphism is provided by the isometry $R_C$
\begin{equation}  \label{isometry}
(\L _C)^{\prime }\ni k \longmapsto R_Ck:=k\cdot C^{|\cdot|}\in {\mathcal{K}}%
_C.
\end{equation}

In fact, we may say about a duality between Banach spaces $\L _C$
and ${\mathcal{K}}_C$, which is given by the following expression
\begin{equation}
\left\langle\!\left\langle G,\,k \right\rangle\!\right\rangle :=
\int_{\Gamma_{0}}G\cdot k\, d\lambda,\quad G\in\L _C, \ k\in {\mathcal{K}}_C
\label{duality}
\end{equation}
with
\begin{equation}
\left\vert \left\langle\!\left\langle G,k \right\rangle\!\right\rangle
\right\vert \leq \|G\|_C \cdot\|k\|_{{\mathcal{K}}_C}.  \label{funct_est}
\end{equation}
It is clear that for any $k\in {\mathcal{K}}_C$
\begin{equation}  \label{RB-norm}
|k(\eta)|\leq \|k\|_{{\mathcal{K}}_C} \, C^{|\eta|} \quad \text{for } \lambda%
\text{-a.a. } \eta\in\Gamma_0.
\end{equation}

Let $\bigl( {\hat{L}} ^{\prime }, D({\hat{L}} ^{\prime })\bigr)$ be
an operator in $(\L_C)^{\prime }$ which is dual to the closed
operator $\bigl( {\hat{L}} , D({\hat{L}} )\bigr)$. We consider also
its image in ${\mathcal{K}}_C$ under isometry $R_C$, namely, let ${\hat{L}}^{*}=R_C{%
\hat{L}} ^{\prime }R_{C^{-1}}$ with a domain $D({\hat{L}} ^{*})=R_C  D({\hat{L}%
} ^{\prime })$. Then, for any $G\in D(\hat{L})$, $k\in D({\hat{L}}^\ast)$
\begin{align*}
\int_{\Gamma_0}G\cdot {\hat{L}}^\ast k d\lambda&=\int_{\Gamma_0}G\cdot R_C{%
\hat{L}} ^{\prime }R_{C^{-1}} k d\lambda=\int_{\Gamma_0}G\cdot {\hat{L}}
^{\prime }R_{C^{-1}} k d\lambda_C \\
&= \int_{\Gamma_0}{\hat{L}} G\cdot R_{C^{-1}} k d\lambda_C=\int_{\Gamma_0}{%
\hat{L}} G\cdot k d\lambda,
\end{align*}
therefore, ${\hat{L}}^\ast$ is the dual operator to ${\hat{L}} $ w.r.t.
the duality \eqref{duality}.

By \cite{FiKoOl07}, we have the precise form of ${\hat{L}}^{*}$ on $D({\hat{L}}^\ast)$:
\begin{align}  \label{dual-descent}
({\hat{L}}^* k)(\eta)=&-\vert \eta \vert k(\eta) \\
&+z\sum_{x\in \eta}e^{-E^\phi (x,\eta\setminus x)} \int_{\Gamma_0}e_\lambda
(e^{-\phi (x - \cdot)}-1,\xi) k((\eta\setminus x)\cup\xi)\,d\lambda (\xi).
\notag
\end{align}

Under condition \eqref{verysmallparam}, we consider the adjoint semigroup ${%
\hat{T}} ^{\prime }(t)$ in $(\L _C)^{\prime }$ and its image ${\hat{T}}%
^\ast(t)$ in ${\mathcal{K}}_C$. Now, we may apply general results
about adjoint semigroups (see, e.g., \cite{EN}) onto the semigroup
${\hat{T}}^\ast(t)$. The last semigroup will be weak*-continuous,
moreover, weak*-differentiable at $0$ and ${\hat{L}}^\ast$ will be
weak*-generator of ${\hat{T}}^\ast(t)$. Here and below we mean
``weak*-properties'' w.r.t. duality \eqref{duality}. Let
\begin{equation}
\mathring{\K}_C=\Bigl\{ k\in{\mathcal{K}}_C \Bigm| \exists
\lim_{t\downarrow0}\| {\hat{T}}^\ast(t)k - k\|_{{\mathcal{K}}_C} =0\Bigr\}.
\end{equation}
Then $\mathring{\K}_C$ is a closed, weak*-dense, ${\hat{T}}^\ast(t)$-invariant
linear subspace of ${\mathcal{K}}_C$. Moreover, $\mathring{\K}_C=\overline{D(%
{\hat{L}}^\ast)}$ (the closure is in the norm of ${\mathcal{K}}_C$). Let ${%
\hat{T}}^\odot(t)$ denote the restriction of ${\hat{T}}^\ast(t)$ onto Banach
space ${\mathring{\K}}_C$. Then ${\hat{T}} ^\odot(t)$ is a $C_0$-semigroup
on ${\mathring{\K}}_C$ and its generator ${\hat{L}} ^\odot$ will be part of $%
{\hat{L}}^\ast$, namely,
\[
D({\hat{L}} ^\odot)=\Bigl\{k\in D({\hat{L}}^\ast) \Bigm| {\hat{L}}^\ast k\in \overline{D({\hat{L}}^\ast)}\Bigr\}
\]
and ${\hat{L}%
}^\ast k ={\hat{L}}^\odot k$ for any $k\in D({\hat{L}}^\odot)$.

And now we consider another ${\hat{T}}^\ast(t)$-invariant subspace. We
present, at first, the useful subspace in $D({\hat{L}}^\ast)$.

\begin{proposition}
\label{domain-adj} For any $\a\in(0;1)$ the following inclusions
hold ${\mathcal{K}}_{\a C}\subset D({\hat{L}}^\ast)\subset
\overline{D({\hat{L}}^\ast)} \subset{\mathcal{K}}_C$.
\end{proposition}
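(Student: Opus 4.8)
The two right–hand inclusions are essentially formal: $D(\hL^\ast)\subset\overline{D(\hL^\ast)}$ holds by definition of the closure, and since $D(\hL^\ast)\subset\K_C$ while $\K_C$ is a Banach space, the closure $\overline{D(\hL^\ast)}$ still lies in $\K_C$. So the content of the proposition is the inclusion $\K_{\aC}\subset D(\hL^\ast)$. Recall that, $\hL$ being a densely defined closed operator in $\L_C$, membership $k\in D(\hL^\ast)$ (for the adjoint w.r.t.\ the duality \eqref{duality}) is equivalent to $\|\cdot\|_C$–continuity of the linear functional $G\mapsto\llu\hL G,k\rru$ on $D(\hL)$. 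Accordingly, I would fix $\a\in(0,1)$ and $k\in\K_{\aC}$ and proceed in three steps: (i) show that the right–hand side of \eqref{dual-descent} defines a function $k^\sharp$ which lies in $\K_C$, with a norm estimate $\|k^\sharp\|_{\K_C}\le M\|k\|_{\K_{\aC}}$ for some constant $M=M(\a,C,z,\phi)$; (ii) verify that $\llu\hL G,k\rru=\llu G,k^\sharp\rru$ for all $G$ in a core of $\hL$; (iii) combine (i) and (ii) to bound the functional and conclude $k\in D(\hL^\ast)$ (its action being then given by \eqref{dual-descent}, consistently with \cite{FiKoOl07}).

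For (i) I would insert the a.e.\ bound $|k(\zeta)|\le\|k\|_{\K_{\aC}}(\aC)^{|\zeta|}$ (which is \eqref{RB-norm} written for $\K_{\aC}$) into \eqref{dual-descent}, estimate $e^{-E^\phi}\le1$ and $|e_\lambda(e^{-\phi(x-\cdot)}-1,\xi)|=e_\lambda(1-e^{-\phi(x-\cdot)},\xi)$, and evaluate the resulting $\xi$–integral by \eqref{LP-exp-mean}: $\int_{\Gamma_0}(\aC)^{|\xi|}e_\lambda(1-e^{-\phi(x-\cdot)},\xi)\,d\lambda(\xi)=\exp\{\aC\,C_\phi\}$, which is finite by \eqref{weak_integrability}. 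Collecting the two terms yields $|k^\sharp(\eta)|\le\|k\|_{\K_{\aC}}\,|\eta|\,\a^{|\eta|}\,C^{|\eta|}\bigl(1+z\,e^{\aC C_\phi}/(\aC)\bigr)$; since $\a<1$ the quantity $c(\a):=\sup_{n\in\N_0}n\,\a^{n}$ is finite, so $\|k^\sharp\|_{\K_C}\le c(\a)\bigl(1+z\,e^{\aC C_\phi}/(\aC)\bigr)\|k\|_{\K_{\aC}}=:M\|k\|_{\K_{\aC}}$. (The same computation with $z=0$ incidentally reproves $\K_{\aC}\subset\K_C$.) This is precisely where $\a<1$ is essential: the factor $|\eta|$ appearing in both terms of \eqref{dual-descent} is absorbed only because $\a^{|\eta|}$ decays, and the argument fails at $\a=1$.

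For (ii), take $G\in\Bbs(\Gamma_0)$, where all integrals below are manifestly finite (one may alternatively work on the core $\L_{2C}\subset D(\hL)$, the needed absolute convergence being supplied by step (i)). Writing $\llu\hL G,k\rru$ through \eqref{Lhat} as $-\int_{\Gamma_0}|\eta|G(\eta)k(\eta)\,d\lambda(\eta)$ plus the triple sum–integral term, I would apply Lemma \ref{Minlos} to the triple term with $H(\xi,\eta\setminus\xi,\eta)=k(\eta)\int_{\R^d}e^{-E^\phi(x,\xi)}G(\xi\cup x)e_\lambda(e^{-\phi(x-\cdot)}-1,\eta\setminus\xi)\,dx$, turning the sum over $\xi\subset\eta$ into a double Lebesgue–Poisson integral over independent $\xi$ and $\zeta$ carrying the factor $k(\xi\cup\zeta)$. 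A second use of \eqref{minlosid} in its one–point form $\int_{\R^d}\int_{\Gamma_0}g(x,\xi)\,d\lambda(\xi)\,dx=\int_{\Gamma_0}\sum_{x\in\eta}g(x,\eta\setminus x)\,d\lambda(\eta)$ then absorbs the $x$–integration together with the argument $\xi\cup x$ into a single configuration $\eta$, and after an absolutely convergent interchange of the $\eta$– and $\zeta$–integrations the expression becomes $\int_{\Gamma_0}G(\eta)k^\sharp(\eta)\,d\lambda(\eta)=\llu G,k^\sharp\rru$ with $k^\sharp$ exactly the right–hand side of \eqref{dual-descent}. Step (iii) is then immediate: by \eqref{funct_est} and step (i), $|\llu\hL G,k\rru|=|\llu G,k^\sharp\rru|\le\|G\|_C\|k^\sharp\|_{\K_C}\le M\|k\|_{\K_{\aC}}\|G\|_C$ for all $G$ in the core $\Bbs(\Gamma_0)$, hence for all $G\in D(\hL)$ by density in the graph norm; thus $k\in D(\hL^\ast)$, i.e.\ $\K_{\aC}\subset D(\hL^\ast)$. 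I expect the only genuine obstacle to be step (ii) — organizing the combinatorics in the repeated application of \eqref{minlosid} and justifying the Fubini interchanges; steps (i) and (iii) are routine once \eqref{dual-descent} has been reobtained from the computation.
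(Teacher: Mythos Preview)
Your proposal is correct and follows essentially the same route as the paper: estimate the right-hand side of \eqref{dual-descent} in $\K_C$ via the bound $|k(\zeta)|\le\|k\|_{\K_{\aC}}(\aC)^{|\zeta|}$, \eqref{LP-exp-mean}, and the boundedness of $n\mapsto n\a^n$ (the paper records the explicit constant $-1/(e\ln\a)$), then invoke Lemma~\ref{Minlos} to pass from $\hL$ to its adjoint. Your write-up is in fact more explicit than the paper's---you spell out the duality computation (your step~(ii)) and retain the factor $z$ in the second term, which the paper's displayed estimate silently drops.
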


\begin{proof}
Let $\a \in(0;1)$ and $k\in\K_{\a C}$ then, using \eqref{LP-exp-mean}
and \eqref{RB-norm}, for $\la$-a.a. $\eta\in\Ga_0$ we may estimate
\begin{align*}
&\, C^{-\lv  \eta \rv  }\lv  \eta \rv  \lv k\left( \eta \right)
\rv   +\sum_{x\in \eta }\int_{\Ga_{0}}e_{\la }\left( 1-e^{-\phi
\left( x-\cdot \right) },\xi \right) C^{-\lv \eta \rv  }\lv k\left(
\left( \eta \setminus x\right) \cup \xi \right) \rv  d\la
\left( \xi \right)  \\
\leq\,  &C^{-\lv  \eta \rv  }\lv  \eta \rv
\lV  k\rV _{\K_{\a C}}\left( \a C%
\right)^{\lv  \eta \rv  } \\
&+\sum_{x\in \eta }\int_{\Ga _{0}}e_{\la  }\left( 1-e^{-\phi \left(
x-\cdot \right) },\xi \right) C^{-\lv  \eta \rv  }\lV k\rV_{\K_{\a
C}}\left( \a C\right)^{\lv  \left( \eta \setminus x\right) \cup \xi
\rv  }d\la
\left( \xi \right)  \\
=\, &\a^{|\eta|}|\eta| \lV  k\rV _{\K_{\a C}} +\frac{1}{\a
C }\lV  k\rV _{\K_{\aC}}%
\a^{|\eta|}\sum_{x\in \eta }\int_{\Ga _{0}}e_{\la  }\left( \a
C\left( 1-e^{-\phi \left( x-\cdot \right)
}\right) ,\xi \right) d\la  \left( \xi \right)  \\
=\, &\a^{|\eta|}|\eta| \lV  k\rV _{\K_{\a C}}+\frac{1}{\a
C }\lV  k\rV _{\K_{\aC}}%
\a^{|\eta|}\lv  \eta \rv  \exp \left\{ \a C%
C_{\phi }\right\}  \\
\leq\,  &\lV  k\rV _{\K_{\a C}} \frac{-1}{e\ln
\a } \left( 1+\frac{1}{%
\a C }\exp \left\{ \a C C_{\phi }\right\} \right),
\end{align*}
since $x\a^x \leq -\dfrac{1}{e\ln \a }$ for any $\a \in(0;1)$ and
$x\geq0$. Using the definition of $D({\hat{L}}^\ast)$ and Lemma \ref{Minlos} we get immediately  the statement of the proposition.
\end{proof}

\begin{remark}
By the same arguments, the set of all functions $k\in{\mathcal{K}}_C$ such
that
\begin{equation*}
| k(\eta) |\leq \mathrm{const}\cdot \frac{1}{|\eta|}C^{|\eta|}, \quad
\eta\in\Gamma_0\setminus\{\emptyset\}
\end{equation*}
is a subset of $D({\hat{L}}^\ast)$. Due to the elementary inequality
$\a^x <
\mathrm{const} \cdot x^{-1}$ for any $\a \in(0;1)$, $x>0$, we have that this set contains ${%
\mathcal{K}}_{\a C}$. But the smaller set ${\mathcal{K}}_{\a C}$ is
more useful for our calculations.
\end{remark}

\begin{proposition}
\label{invariantspace} Suppose that  (\ref{verysmallparam}) is satisfied.  Furthermore, we additionally assume that
\begin{equation}\label{new_z}
z< C e^{-CC_\phi},\quad\mathrm{if}\quad CC_\phi\leq \ln2.
\end{equation}
Then there exists $\a_0=\a_0(z,\phi,C)\in (0;1)$ such that for any $\a\in (\a_0;1)$ the set
${\mathcal{K}}_{\a C}$ is the ${\hat{T}}^\ast(t)$-invariant linear subspace of
${\mathcal{K}}_C$.
\end{proposition}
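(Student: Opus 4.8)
The plan is to realise the restriction of $\hT^\ast(t)$ to $\K_{\aC}$ as the (image of the) Banach-space adjoint of a contraction semigroup generated by $\hL$ on the \emph{smaller} space $\L_{\aC}$; note that $\L_C\subset\L_{\aC}$ and $\K_{\aC}\subset\K_C$ because $\a<1$. Concretely, we (i) pick $\a_0$ so that the generation result of \cite{FKKZh} quoted above applies with $C$ replaced by $\aC$; (ii) verify that the semigroup $\hT_\a(t)$ so obtained on $\L_{\aC}$ agrees with $\hT(t)$ on the common space $\L_C$; (iii) dualise, using that $\L_C$ separates the points of $\K_C$.

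\emph{Step 1 (choice of $\a_0$).} The operator $\hL$ of \eqref{Lhat} and the generation statement of \cite{FKKZh} make sense with any fixed constant larger than $1$ in place of $C$; applying them to $\aC$ (for $\a>1/C$) shows that $\hL$ generates a strongly continuous contraction semigroup $\hT_\a(t)$ on $\L_{\aC}$ with domain $\L_{2\aC}$, provided $z\le\min\{\aC e^{-\aC C_\phi},\,2\aC e^{-2\aC C_\phi}\}$. Write $g(x):=xe^{-xC_\phi}$ and $h(\a):=\min\{g(\aC),g(2\aC)\}$; then $h(1)=\min\{g(C),g(2C)\}\ge z$ by \eqref{verysmallparam}, and by continuity of $h$ it is enough to manufacture slack, i.e.\ to keep $h(\a)\ge z$ for $\a$ slightly below $1$. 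If $CC_\phi\le\ln2$ then $g(C)\le g(2C)$, so the binding term of $h(1)$ is $g(C)$ and \eqref{new_z} supplies exactly the strict inequality $z<g(C)=h(1)$, whence $h>z$ near $\a=1$. If $CC_\phi>\ln2$ then $2CC_\phi>1$, so $g'(2C)=e^{-2CC_\phi}(1-2CC_\phi)<0$, giving $g(2\aC)>g(2C)$ for $\a<1$ near $1$; since moreover $g(C)>g(2C)$ in this case, also $g(\aC)>g(2C)$ near $\a=1$, so $h(\a)>g(2C)\ge z$. In either case fix $\a_0\in(0;1)$, $\a_0\ge1/C$, with $z\le h(\a)$ for all $\a\in(\a_0;1)$.

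\emph{Step 2 ($\hT_\a(t)$ extends $\hT(t)$).} As $\a<1$, the inclusions $\L_C\hookrightarrow\L_{\aC}$ and $\L_{2C}\hookrightarrow\L_{2\aC}$ are contractive. Take $G\in\L_{2C}=D(\hL)$ (the generator domain in $\L_C$) and put $u(t):=\hT(t)G$; then $u(t)\in\L_{2C}$ for all $t\ge0$ and $u\in C^1\bigl([0;\infty);\L_C\bigr)$ with $\dot u(t)=\hL u(t)$. Composing with the contraction $\L_C\hookrightarrow\L_{\aC}$, the curve $u$ is also continuously differentiable into $\L_{\aC}$ with the same derivative, and takes values in $\L_{2C}\subset\L_{2\aC}$, the generator domain of $\hT_\a(t)$; hence $u$ is a classical solution of the abstract Cauchy problem for the generator of $\hT_\a(t)$ with initial value $G$, and uniqueness forces $\hT(t)G=\hT_\a(t)G$. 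Since $\Bbs(\Ga_0)\subset\L_{2C}$, the subspace $\L_{2C}$ is dense in $\L_C$; approximating $G\in\L_C$ by elements of $\L_{2C}$ and using continuity of $\hT(t)$ on $\L_C$ and of $\hT_\a(t)$ on $\L_{\aC}$, we get $\hT_\a(t)G=\hT(t)G$ for every $G\in\L_C$, and in particular $\hT_\a(t)G\in\L_C$.

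\emph{Step 3 (dualisation and conclusion).} Via the isometry $R_{\aC}:(\L_{\aC})^{\prime}\to\K_{\aC}$ we obtain, exactly as $\hT^\ast(t)$ was obtained from $\hT(t)$, a bounded operator $\hT_\a^\ast(t)$ on $\K_{\aC}$; and, by the same computation as the one showing $\hL^\ast$ dual to $\hL$ with respect to \eqref{duality}, one has $\int_{\Ga_0}\hT_\a(t)G\cdot k\,d\la=\int_{\Ga_0}G\cdot\hT_\a^\ast(t)k\,d\la$ for $G\in\L_{\aC}$, $k\in\K_{\aC}$ (and likewise for $\hT(t),\hT^\ast(t)$ on $\L_C\times\K_C$). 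Fix $t\ge0$ and $k\in\K_{\aC}\subset\K_C$. For every $G\in\L_C$ (hence $G\in\L_{\aC}$), Step 2 gives
\[
\int_{\Ga_0}G\cdot\hT^\ast(t)k\,d\la=\int_{\Ga_0}\hT(t)G\cdot k\,d\la=\int_{\Ga_0}\hT_\a(t)G\cdot k\,d\la=\int_{\Ga_0}G\cdot\hT_\a^\ast(t)k\,d\la.
\]
Both $\hT^\ast(t)k$ and $\hT_\a^\ast(t)k$ lie in $\K_C$, and since $\L_C\supset\Bbs(\Ga_0)$ separates the points of $\K_C$ (test against $G=\1_A$ for bounded $A\in\B(\Ga_0)$), we conclude $\hT^\ast(t)k=\hT_\a^\ast(t)k\in\K_{\aC}$. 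As $t\ge0$ and $k\in\K_{\aC}$ were arbitrary, $\K_{\aC}$ is $\hT^\ast(t)$-invariant. The one genuinely delicate point is Step 1: the inequalities \eqref{verysmallparam} need not be strict, so slack must be created before the continuity argument can run — this is precisely the role of \eqref{new_z} when $CC_\phi\le\ln2$ (where $Ce^{-CC_\phi}$ is the binding term), whereas for $CC_\phi>\ln2$ the slack is automatic because $g(x)=xe^{-xC_\phi}$ is already decreasing at $x=2C$.
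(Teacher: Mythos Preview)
Your overall strategy—build a contraction semigroup $\hT_\a(t)$ on $\L_{\aC}$ generated by (the closure of) $(\hL,\L_{2\aC})$, identify it with $\hT(t)$ on the common subspace $\L_C$, then dualise—is exactly the paper's. Two implementation differences deserve comment. In Step~1 the paper chooses $\a_0$ \emph{explicitly} by locating the two roots $x_1<1<x_2$ of $xe^{-x}=zC_\phi$ and arranging $x_1<\aC C_\phi<2\aC C_\phi\le x_2$; your continuity argument on $h(\a)=\min\{g(\aC),g(2\aC)\}$ is cleaner and perfectly adequate for the proposition as stated (the explicit choice is only quoted verbatim in later statements). In Step~2 the paper identifies $\hT_\a(t)$ with $\hT(t)$ on $\L_C$ via the common approximating family $\hP_\delta^{[t/\delta]}$ of \eqref{apprsemigroup}, which converges to both semigroups in their respective norms; you instead invoke uniqueness of classical solutions to the abstract Cauchy problem, which avoids importing the explicit approximation machinery.

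There is, however, a slip in Step~2. You write ``$G\in\L_{2C}=D(\hL)$ (the generator domain in $\L_C$)'' and then ``$u(t)\in\L_{2C}$ for all $t\ge0$''. But $\L_{2C}$ is only the domain on which $\hL$ is \emph{defined} via \eqref{Lhat}; the paper states explicitly that the generator of $\hT(t)$ is the \emph{closure} of $(\hL,\L_{2C})$, and a $C_0$-semigroup leaves the domain of its generator invariant, not an arbitrary core. So the claim $u(t)\in\L_{2C}$ is unjustified, and with it the inclusion $u(t)\in\L_{2\aC}$ on which your argument rests. The repair is immediate and keeps your route intact: if $A_C$, $A_\a$ denote the respective closures (the generators on $\L_C$ and $\L_{\aC}$), then $D(A_C)\subset D(A_\a)$ with $A_\a\!\upharpoonright_{D(A_C)}=A_C$ — take a defining sequence $G_n\in\L_{2C}\subset\L_{2\aC}$ with $G_n\to G$ and $\hL G_n\to A_C G$ in $\L_C$, hence also in $\L_{\aC}$ by the contractive embedding. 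Then for $G\in D(A_C)$ the curve $u(t)=\hT(t)G$ stays in $D(A_C)\subset D(A_\a)$ with $\dot u=A_C u=A_\a u$ in $\L_{\aC}$, and uniqueness gives $\hT(t)G=\hT_\a(t)G$; density of $D(A_C)$ in $\L_C$ finishes the step exactly as you wrote it.
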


\begin{proof}
Let us consider function $f(x):=x e^{-x}$, $x\geq 0$. It has the following properties:  $f$ is increasing on $[0; 1]$ from $0$ to $e^{-1}$ and it is
asymptotically decreasing on $[1;+\infty)$ from $e^{-1}$ to $0$; $f(x) < f(2x)$ for $x \in (0, \ln 2)$; $x=\ln 2$ is the only  non-zero solution to
$f(x)=f(2x)$.

By assumption \eqref{verysmallparam}, $zC_\phi \leq \min\{CC_\phi e^{-CC_\phi},
2CC_\phi e^{-2CC_\phi}\}$. Therefore, if $CC_\phi e^{-CC_\phi}\neq 2CC_\phi e^{-2CC_\phi}$ then \eqref{verysmallparam} with necessity implies
\begin{equation}\label{less_e-1}
z C_\phi < e^{-1}.
\end{equation}
This inequality remains also true if $CC_\phi=\ln 2$ because of \eqref{new_z}.
Under condition \eqref{less_e-1}, the equation $f(x)=z C_\phi$ has
exactly two roots, say, $0<x_1<1<x_2<+\infty$. Then,
\eqref{new_z} implies $x_1< C C_\phi < 2 C C_\phi \leq x_2$.

\noindent If $CC_\phi>1$ then we  set $\a_0:=\max\left\{\frac{1}{2};\frac{1}{C
C_\phi};\frac{1}{C}\right\}<1$. This yields $2\a C C_\phi > C
C_\phi$ and $\a CC_\phi >1>x_{1}$. If $x_{1}<CC_\phi \leq 1$ then we
set $\a_0:=\max\left\{\frac{1}{2};\frac{x_{1}}{C
C_\phi};\frac{1}{C}\right\}<1$ that gives $2\a C C_\phi > C C_\phi$
and $\a CC_\phi >x_{1}$.

\noindent As a result,
\begin{equation}\label{ineq-alpha}
x_1<\a CC_\phi  < CC_\phi <2 \a C C_\phi < 2 C C_\phi \leq x_2
\end{equation}
and $1<\a C<C<2\a C<2C$. The last inequality shows that $\L_{2C
}\subset\L_{2\a C}\subset \L_C\subset \L_{\a C}$. Moreover, by
\eqref{ineq-alpha}, we may prove that the operator $(\hL , \L_{2\a
C})$ is closable in $\L_{\a C}$ and its closure is a generator of a
contraction semigroup $\hT_\a (t)$ on $\L_{\a C}$. The proof is
identical to that in \cite{FKKZh}.

It is easy to see, that $\hT_\a (t) G= \hT (t) G$ for any
$G\in\L_C$. Indeed, from the construction of the semigroup $\hT
(t)$, see \cite{FKKZh}, and analogous construction for the semigroup
$\hT_\a (t)$, we have that there exists family of mappings
$\hP_\delta$, $\delta>0$ independent of $\a $ and $C$, namely,
\begin{align}\label{apprsemigroup}
\bigl(\hat{P}_{\delta } G\bigr) \left( \eta \right) :=&\sum_{\xi
\subset \eta }\left( 1-\delta \right) ^{\left\vert \xi \right\vert
}\int_{\Gamma _{0}}\left( z\delta \right) ^{\left\vert \omega
\right\vert }G\left( \xi \cup \omega \right)
\\&\times \prod\limits_{y\in \xi }e^{-E^\phi\left( y,\omega \right)
}\prod\limits_{y\in \eta \setminus \xi }\left( e^{-E^\phi\left(
y,\omega \right) }-1\right) d\lambda \left( \omega \right) , \quad
\eta\in\Ga_0.\notag
\end{align}
such that $\hP _\delta^{\left[\frac{t}{\delta}\right]}$ for any
$t\geq 0$ strongly converges to $\hT (t)$ and $\hT_\a (t)$ in $\L_C$
and $\L_{\a C}$, correspondingly, as $\delta\goto 0$. Here and below
$[\,\cdot\,]$ means the entire part of a number. Then for any
$G\in\L_{C }\subset\L_{\a C}$ we have that $\hT (t)G\in\L_{C
}\subset\L_{\a C}$ and $\hT_\a (t) G\in\L_{\a C}$ and
\begin{align*}
\| \hT (t)G-\hT_\a (t) G\|_{\a C}&\leq \Bigl\| \hT (t)G-\hP
_\delta^{\left[\frac{t}{\delta}\right]}G\Bigr\|_{\a C} + \Bigl\| \hT_\a (t)
G-\hP_\delta^{\left[\frac{t}{\delta}\right]}G\Bigr\|_{\a C}\\&\leq \Bigl\| \hT
(t)G-\hP_\delta^{\left[\frac{t}{\delta}\right]}G\Bigr\|_{ C} + \Bigl\| \hT_\a
(t) G-\hP_\delta^{\left[\frac{t}{\delta}\right]}G\Bigr\|_{\a C}\goto0,
\end{align*}
as $\delta\goto 0$. Therefore, $\hT (t)G=\hT_\a (t) G$ in $\L_{\a
C}$ (recall that $G\in\L_C$) that yields $ \hT (t)G(\eta)=\hT_\a (t)
G(\eta)$ for $\la$-a.a. $\eta\in\Ga_0$ and, therefore, $\hT
(t)G=\hT_\a (t) G$ in $\L_{C}$.

Note that for any $G\in\L_C\subset\L_{\a C}$ and for any $k\in
\K_{\a C}\subset \K_C$ we have $\hT_\a (t) G\in\L_{\a C}$ and
\begin{equation*}
\lluu   \hT_\a (t) G, k\rruu  =\lluu   G, \hT^\ast_\a (t) k\rruu ,
\end{equation*}
where, by construction, $\hT^\ast_\a (t) k\in\K_{\a C}$. But
$G\in\L_C$, $k\in\K_C$ implies
\begin{equation*}
\lluu   \hT_\a (t) G, k\rruu  =\lluu   \hT (t) G, k\rruu  =\lluu G,
\hT^\ast(t) k\rruu .
\end{equation*}
Hence, $\hT^\ast(t) k = \hT^\ast_\a (t) k\in\K_{\a C}$, $k\in\K_{\a C}$ that proves
the statement.
\end{proof}

\begin{remark}
As a result, \eqref{verysmallparam} implies that for any $k_0\in \overline{D({%
\hat{L}}^\ast)}$ the Cauchy problem in ${\mathcal{K}}_C$
\begin{equation}
\begin{cases}
\dfrac{\partial}{\partial t} k_t = {\hat{L}}^\ast k_t \\\label{cau1}
k_t \bigr|_{t=0} = k_0%
\end{cases}%
\end{equation}
has a unique mild solution: $k_t= {\hat{T}}^\ast (t)k_0= {\hat{T}}^\odot
(t)k_0\in\overline{D({\hat{L}}^\ast)}$. Moreover, $k_0\in{\mathcal{K}}_{\a C}
$ implies $k_t\in{\mathcal{K}}_{\a C}$ provided \eqref{new_z} is satisfied.
\end{remark}

\begin{remark}
The Cauchy problem \eqref{cau1} is well-posed in $\mathring{\K}_C=\overline{D(%
{\hat{L}}^\ast)}$, i.e., for every $k_0\in D({\hat{L}}^\odot)$ there exists a unique solution $k_{t}\in\mathring{\K}_C$ of  \eqref{cau1}.
\end{remark}

Let \eqref{verysmallparam} and \eqref{new_z} be satisfied and let $\alpha_0$ be chosen as in the proof of Proposition~\ref%
{invariantspace} and fixed. Suppose that  $\alpha\in(\alpha_0;1)$. Then,
Propositions~\ref{domain-adj} and \ref{invariantspace} imply
$\overline{{\mathcal{K}}_{\a C}}\subset\overline{D(\hat{L}^{\ast
})}$ and the Banach subspace $\overline{{\mathcal{K}}_{\a C}}$ is
${\hat{T}}^\ast(t)$- and, therefore, ${\hat{T}}^\odot(t)$-invariant
due to the continuity of these operators.

Let now ${\hat{T}}^{\odot\a}(t)$ be the restriction of the strongly
continuous semigroup ${\hat{T}}^\odot(t)$
onto the closed linear
subspace $\overline{{\mathcal{K}}_{\a C}}$.
By general result (see, e.g.,
\cite{EN}), ${\hat{T}} ^{\odot\a}(t)$ is a strongly continuous
semigroups on $\overline{{\mathcal{K}}_{\a C}}$ with generator
${\hat{L}}^{\odot\a}$ which is the restriction of the operator
${\hat{L}}^\odot $. Namely,
\begin{equation}
D({\hat{L}}^{\odot\a})=\Bigl\{k\in \overline{{\mathcal{K}}_{\a C}} \Bigm| {%
\hat{L}}^\ast k\in\overline{{\mathcal{K}}_{\a C}} \Bigr\},
\label{domAdjtimes}
\end{equation}
and
\begin{equation}
{\hat{L}}^{\odot\a} k = {\hat{L}}^\odot k = {\hat{L}}^\ast k, \qquad k\in D({%
\hat{L}}^{\odot\a})  \label{restRRren}
\end{equation}

Since ${\hat{T}}(t)$ is a contraction semigroup on $\L _C$, then, ${\hat{T}}
^{\prime }(t)$ is also a contraction semigroup on $(\L _C)^{\prime }$; but
isomorphism \eqref{isometry} is isometrical, therefore, ${\hat{T}}^\ast(t)$
is a contraction semigroup on ${\mathcal{K}}_C$. As a result, its restriction $%
{\hat{T}}^{\odot\a}(t)$ is a contraction semigroup on $\overline{{\mathcal{K}}_%
{\a C}}$. Note also, that by \eqref{domAdjtimes},
\begin{equation*}
D_{\a C}:=\Bigl\{k\in {\mathcal{K}}_{\a C} \Bigm| {\hat{L}}^\ast k\in%
\overline{{\mathcal{K}}_{\a C}} \Bigr\}
\end{equation*}
is a core for ${\hat{L}}^{\odot\a}$ in $\overline{{\mathcal{K}}_{\a C}}$.

By \eqref{apprsemigroup}, for any $k\in {\mathcal{K}}_{{\a C}}$, $G\in B_{\mathrm{bs}}(\Gamma
_{0})$ we have
\begin{align*}
& \int_{\Gamma _{0}}(\hat{P}_{\delta }G)\left( \eta \right) k\left( \eta
\right) d\lambda \left( \eta \right)  \\
=& \int_{\Gamma _{0}}\sum_{\xi \subset \eta }\left( 1-\delta \right)
^{\left\vert \xi \right\vert }\int_{\Gamma _{0}}\left( z\delta \right)
^{\left\vert \omega \right\vert }G\left( \xi \cup \omega \right)
\prod\limits_{y\in \xi }e^{-E^{\phi }\left( y,\omega \right) } \\
& \times \prod\limits_{y\in \eta \setminus \xi }\left( e^{-E^{\phi }\left(
y,\omega \right) }-1\right) d\lambda \left( \omega \right) k\left( \eta
\right) d\lambda \left( \eta \right)  \\
=& \int_{\Gamma _{0}}\int_{\Gamma _{0}}\left( 1-\delta \right) ^{\left\vert
\xi \right\vert }\int_{\Gamma _{0}}\left( z\delta \right) ^{\left\vert
\omega \right\vert }G\left( \xi \cup \omega \right) \prod\limits_{y\in \xi
}e^{-E^{\phi }\left( y,\omega \right) } \\
& \times \prod\limits_{y\in \eta }\left( e^{-E^{\phi }\left( y,\omega
\right) }-1\right) d\lambda \left( \omega \right) k\left( \eta \cup \xi
\right) d\lambda \left( \xi \right) d\lambda \left( \eta \right)  \\
=& \int_{\Gamma _{0}}\int_{\Gamma _{0}}\sum_{\omega \subset \xi }\left(
1-\delta \right) ^{\left\vert \xi \setminus \omega \right\vert }\left(
z\delta \right) ^{\left\vert \omega \right\vert }G\left( \xi \right)
\prod\limits_{y\in \xi \setminus \omega }e^{-E^{\phi }\left( y,\omega
\right) } \\
& \times \prod\limits_{y\in \eta }\left( e^{-E^{\phi }\left( y,\omega
\right) }-1\right) k\left( \eta \cup \xi \setminus \omega \right) d\lambda
\left( \xi \right) d\lambda \left( \eta \right) ,
\end{align*}%
therefore,%
\begin{align}\label{apprfordual}
(\hat{P}_{\delta }^{\ast }k)\left( \eta \right)  =&\sum_{\omega \subset \eta
}\left( 1-\delta \right) ^{\left\vert \eta \setminus \omega \right\vert
}\left( z\delta \right) ^{\left\vert \omega \right\vert }\prod\limits_{y\in
\eta \setminus \omega }e^{-E^{\phi }\left( y,\omega \right) } \\
& \times \int_{\Gamma _{0}}\prod\limits_{y\in \xi }\left( e^{-E^{\phi
}\left( y,\omega \right) }-1\right) k\left( \xi \cup \eta \setminus \omega
\right) d\lambda \left( \xi \right) .\notag
\end{align}

\begin{proposition}\label{imp_prop}
Suppose that \eqref{verysmallparam} and \eqref{new_z} are fulfilled. Then, for any $k\in D_{\a C}$ and $\alpha\in(\alpha_{0},\,1)$, where $\alpha_0$ is chosen as in the proof of Proposition~\ref{invariantspace},
\begin{equation}  \label{apprrenest}
\lim_{\delta\rightarrow 0}\biggl\Vert \frac{1}{\delta}( {\hat{P}}
^\ast_{\delta}-1\!\!1 )k - {\hat{L}}^{\odot\a} k\biggr\Vert_{{\mathcal{K}}_C}
=0.
\end{equation}
\end{proposition}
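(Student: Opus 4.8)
The plan is to prove the norm-convergence \eqref{apprrenest} by splitting the difference quotient into two pieces that mirror the two terms of $\hat L^\ast$, and then estimating each in the $\K_C$-norm by exploiting the one parameter of slack provided by working inside $\K_{\a C}$ rather than $\K_C$. First I would recall from \eqref{apprfordual} that $\hat P^\ast_\delta k$ is a sum over $\omega\subset\eta$ of terms weighted by $(1-\delta)^{|\eta\setminus\omega|}(z\delta)^{|\omega|}$; since the factor $(z\delta)^{|\omega|}$ is $O(\delta)$ for $|\omega|\geq 1$ and equals $1$ for $\omega=\emptyset$, only the summands with $|\omega|=0$ and $|\omega|=1$ can contribute in the limit $\delta\to 0$, while the full contribution of all $|\omega|\geq 2$ summands is $O(\delta^2)$ times a convergent integral. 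So the main computation is: isolate the $\omega=\emptyset$ term, which gives $(1-\delta)^{|\eta|}k(\eta)$, and the $|\omega|=1$ terms, which give $\delta\, z\sum_{x\in\eta}(1-\delta)^{|\eta|-1}e^{-E^\phi(x,\eta\setminus x)}\int_{\Ga_0}e_\lambda(e^{-\phi(x-\cdot)}-1,\xi)k((\eta\setminus x)\cup\xi)\,d\lambda(\xi)$.

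Next, I would form $\frac1\delta(\hat P^\ast_\delta-\1)k$ and compare with $\hat L^\ast k$ as given by \eqref{dual-descent}. The $\omega=\emptyset$ part yields $\frac1\delta\bigl((1-\delta)^{|\eta|}-1\bigr)k(\eta)$, which converges pointwise to $-|\eta|k(\eta)$; the $|\omega|=1$ part yields $z\sum_{x\in\eta}(1-\delta)^{|\eta|-1}e^{-E^\phi(x,\eta\setminus x)}\int e_\lambda(e^{-\phi(x-\cdot)}-1,\xi)k((\eta\setminus x)\cup\xi)\,d\lambda(\xi)$, which converges pointwise to the second term of $\hat L^\ast k$; and the $|\omega|\geq 2$ remainder, divided by $\delta$, is $O(\delta)$ times something controlled. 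The point, however, is to upgrade all of this to $\K_C$-norm convergence, i.e. to control $C^{-|\eta|}$ times the difference uniformly (essentially) in $\eta$. This is where the inclusion $k\in\K_{\a C}$ is used exactly as in the proof of Proposition~\ref{domain-adj}: writing $|k(\cdot)|\leq\|k\|_{\K_{\a C}}(\a C)^{|\cdot|}$ and pulling the $(\a C)$-factors through the $\xi$-integral via \eqref{LP-exp-mean} produces, for each fixed-cardinality block, a bound of the form $\a^{|\eta|}P(|\eta|)e^{\a CC_\phi}\|k\|_{\K_{\a C}}$ with $P$ a low-degree polynomial in $|\eta|$ (coming from $|\eta|$, from $\sum_{x\in\eta}1$, etc.), and since $\a<1$ we have $\a^{|\eta|}|\eta|^j$ bounded uniformly in $|\eta|$ (using $x^j\a^x\leq \text{const}(\a,j)$, just as $x\a^x\leq -1/(e\ln\a)$ was used earlier). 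Hence each block is $\leq \text{const}\cdot C^{|\eta|}$ up to the small factor, so the error terms live in $\K_C$ with controlled norm.

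The technical heart is then a dominated-convergence argument in $L^\infty(\Ga_0,\lambda)$ uniform bound rather than pointwise-a.e., so concretely I would show: (i) $\bigl\|\frac1\delta\bigl((1-\delta)^{|\cdot|}-1\bigr)k+|\cdot|k\bigr\|_{\K_C}\to 0$, which follows because $C^{-|\eta|}|k(\eta)|\leq\|k\|_{\K_{\a C}}\a^{|\eta|}$ and $\bigl|\frac1\delta((1-\delta)^n-1)+n\bigr|\leq \text{const}\cdot n^2\delta$ for $\delta$ small, with $n^2\a^n$ bounded; (ii) the $|\omega|=1$ term converges in $\K_C$-norm to the second term of $\hat L^\ast k$, where the only $\delta$-dependence is the innocuous prefactor $(1-\delta)^{|\eta|-1}\to 1$, again dominated using the $\K_{\a C}$-bound and \eqref{LP-exp-mean} to absorb the $\xi$-integral, so that $\sup_\eta C^{-|\eta|}|\cdots|\bigl(1-(1-\delta)^{|\eta|-1}\bigr)\leq\text{const}\cdot\sup_\eta\a^{|\eta|}|\eta|^2\delta\to0$; and (iii) the $|\omega|\geq 2$ remainder divided by $\delta$ has $\K_C$-norm $O(\delta)$, estimated by the same scheme with an extra factor $z\delta$ and summing the geometric-type series in $|\omega|$ which converges precisely because the $(\a C)^{|\omega|}$ weights are beaten by $e^{-E^\phi}$-factors bounded by $1$ and by \eqref{LP-exp-mean}. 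Finally I would invoke $k\in D_{\a C}$ to identify the pointwise limit $\hat L^\ast k$ with $\hat L^{\odot\a}k\in\overline{\K_{\a C}}\subset\K_C$ via \eqref{restRRren}, completing the proof. I expect step (ii)--(iii) — making the domination uniform in $\eta$ while the $\xi$-integral ranges over all of $\Ga_0$ — to be the main obstacle, but it is handled entirely by the $\a^{|\eta|}\times(\text{polynomial})$ trick already deployed in Proposition~\ref{domain-adj}, so no genuinely new idea is needed beyond careful bookkeeping of the $\delta$-powers.
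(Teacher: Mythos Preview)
Your proposal is correct and follows essentially the same route as the paper: the identical decomposition of \eqref{apprfordual} into the $|\omega|=0$, $|\omega|=1$, and $|\omega|\geq 2$ blocks, and the same device of bounding $|k|\leq\|k\|_{\K_{\a C}}(\a C)^{|\cdot|}$ so that every error term acquires a factor $\a^{|\eta|}$ times a polynomial in $|\eta|$, which is uniformly bounded. One point to sharpen in your step~(iii): the reason the $|\omega|\geq 2$ tail is $O(\delta)$ in $\K_C$-norm is not that ``$(\a C)^{|\omega|}$ weights are beaten by $e^{-E^\phi}\leq 1$''; rather, after bounding $|k|$ and evaluating the $\xi$-integral via \eqref{LP-exp-mean} together with the subadditivity $1-e^{-E^\phi(y,\omega)}\leq\sum_{x\in\omega}\bigl(1-e^{-\phi(x-y)}\bigr)$, each summand carries the factor $\bigl(\tfrac{z\delta}{\a C}\,e^{\a C C_\phi}\bigr)^{|\omega|}$, and it is precisely the hypothesis $\a>\a_0$ (giving $z\,e^{\a C C_\phi}\leq \a C$) that reduces this to $\delta^{|\omega|}$ and lets the binomial tail sum to at most $\delta\,\a^{|\eta|}|\eta|(|\eta|-1)$.
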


\begin{proof}
Let us recall \eqref{dual-descent} and define
\begin{align*}
(\hP_{\delta }^{\ast, (0) }k)\left( \eta \right) =&\,(1-\delta)^\n k(\eta);\\
(\hP_{\delta }^{\ast, (1) }k)\left( \eta \right) =&\,z\delta
\sum_{x\in \eta
}\left( 1-\delta \right)^{\left\vert \eta \right\vert -1} e_\la\left( e^{-\phi \left( x -\cdot\right) },\eta
\setminus x \right) \notag\\&\times\int_{\Gamma_{0}}e_{\lambda
}\left( e^{-\phi \left( x -\cdot\right)
}-1,\xi \right) k\left( \xi \cup \eta \setminus x
\right) d\lambda \left( \xi \right);
\end{align*}
and $\hP_{\delta }^{\ast, (\geq 2) } = \hP_{\delta }^{\ast } -
\hP_{\delta }^{\ast, (0) } -  \hP_{\delta }^{\ast, (1)
}$.

We will use the following elementary inequality, for any
$n\in\N\cup\{0\}$, $\delta\in(0;1)$
\begin{align*}
0 \leq n-   \frac{1-(1-\delta)^n}{\delta}\leq\delta \frac{n(n-1)}{2}.
\end{align*}
Then, for any $k\in\K_\aC$ and $\lambda$-a.a. $\eta\in\Gamma_{0}$, $\eta\neq\emptyset$
\begin{align}
&C^{-\n}\biggl\vert\frac{1}{\delta}( \hP^{\ast,(0)}_{\delta,\eps}-\1 )k(\eta) + |\eta|k(\eta)\biggr\vert\notag\\
\leq& \,\Vert k\Vert_{\K_\aC} \a^\n \biggl\vert \n-
\frac{1-(1-\delta)^\n}{\delta}\biggr\vert\leq \frac{\delta}{2} \Vert
k\Vert_{\K_\aC} \a^\n \n (\n-1)\label{eq1}
\end{align}
and the function $\a^x x(x-1)$ is bounded for $x\geq 1$, $\a\in(0;1)$.
Next, for any $k\in\K_\aC$ and $\lambda$-a.a. $\eta\in\Gamma_{0}$, $\eta\neq\emptyset$
\begin{align}
&C^{-\n}\biggl\vert\frac{1}{\delta} \hP ^{\ast,(1)}_{\delta}k(\eta)
-z\sum_{x\in\eta}\int_{\Ga_0} e_{\la  }\left( e^{- \phi \left(
x-\cdot \right) },\eta\setminus x \right)
\notag\\&\qquad\qquad\times e_{\la }\left( e^{-\phi \left( x-\cdot
\right) }-1,\xi   \right)
k\left( \xi \cup \eta\setminus x\right) d\la(\xi)\biggr\vert\notag\\
\leq\, & \Vert k\Vert_{\K_\aC}
\frac{z}{\aC}\a^\n\sum_{x\in\eta}\bigl(1-\left( 1-\delta
\right)^{\left\vert \eta \right\vert -1} \bigr)
\int_{\Gamma_{0}}e_{\lambda }\left( \a C \bigl(e^{-\phi \left( x-\cdot \right) }-1\bigr),\xi \right)  d\lambda \left( \xi \right)\notag\\
 \leq\, & \Vert k\Vert_{\K_\aC} \frac{z}{\aC} \a^\n\sum_{x\in\eta}\bigl(1-\left( 1-\delta \right)^{\left\vert \eta
\right\vert -1} \bigr) \exp{\{\a C C_\phi\}}\notag\\
\leq\,& \Vert k\Vert_{\K_\aC}\frac{z}{\aC} \a^\n \delta \n (\n-1)
\exp{\{\a C C_\phi\}}.\label{eq2}
\end{align}
which is small in $\delta$ uniformly by $\n$. Now, using inequality
\[
1-e^{-E^\phi\left( y,\omega \right) }=1-\prod\limits_{x\in \omega
}e^{-\phi \left( x-y\right) }\leq \sum_{x\in \omega }\left(
1-e^{-\phi \left( x-y\right) }\right),
\]
we obtain
\begin{align}
&\frac{1}{\delta }C^{-\left\vert \eta \right\vert } \sum_{\substack{
\omega \subset \eta  \\ \left\vert \omega \right\vert \geq 2}}
\left( 1-\delta \right)^{\left\vert \eta \setminus \omega
\right\vert }\left( z\delta \right)^{\left\vert \omega \right\vert
}e_\la\left( e^{- E^{\phi }\left( \cdot ,\omega \right) },\eta
\setminus \omega \right)\notag\\&\qquad\times\int_{\Gamma_{0} }e_\la\left(
\Bigl\vert e^{- E^{\phi }\left( \cdot ,\omega \right) }-1
\Bigr\vert,\xi \right) |k( \xi
\cup \eta \setminus \omega ) | d\lambda \left( \xi \right) \notag \\
=\,&\Vert k\Vert_{\K_\aC}\alpha^{\left\vert \eta \right\vert }\frac{1}{\delta }\sum_{\substack{ %
\omega \subset \eta  \\ \left\vert \omega \right\vert \geq 2}}\left(
1-\delta \right)^{\left\vert \eta \setminus \omega \right\vert
}\left( \frac{z\delta }{\alpha C}\exp \left\{ \alpha C
C_\phi\right\} \right)
^{\left\vert \omega \right\vert }; \notag\\
\intertext{recall that $\a>\a_0$, therefore, $z\exp\{\aC
C_\phi\}\leq \aC$, and one may continue} \leq\, &\Vert k\Vert_{\K_\aC}
\alpha ^{\left\vert \eta \right\vert }\frac{1}{\delta }\sum
_{\substack{ \omega \subset \eta  \\ \left\vert \omega \right\vert \geq 2}}%
\left( 1-\delta \right)^{\left\vert \eta \setminus \omega
\right\vert }\delta^{\left\vert \omega \right\vert }\notag\\=\,&\Vert
k\Vert_{\K_\aC}\delta \alpha ^{\left\vert \eta \right\vert
}\sum_{k=2}^{\left\vert \eta \right\vert }\frac{\left\vert \eta
\right\vert !}{k!\left( \left\vert \eta \right\vert -k\right)
!}\left(
1-\delta \right)^{\left\vert \eta \right\vert -k}\delta^{k-2} \notag\\
=\, &\Vert k\Vert_{\K_\aC}\delta \alpha^{\left\vert \eta \right\vert
}\sum_{k=0}^{\left\vert \eta \right\vert -2}\frac{\left\vert \eta
\right\vert !}{\left( k+2\right) !\left( \left\vert \eta \right\vert
-k-2\right) !}\left( 1-\delta \right)
^{\left\vert \eta \right\vert -k-2}\delta^{k} \notag\\
=\, &\Vert k\Vert_{\K_\aC}\delta \alpha^{\left\vert \eta \right\vert
}\left\vert \eta \right\vert \left( \left\vert \eta \right\vert
-1\right) \sum_{k=0}^{\left\vert \eta \right\vert -2}\frac{\left(
\left\vert \eta \right\vert -2\right) !}{\left( k+2\right) !\left(
\left\vert \eta \right\vert -k-2\right) !}\left( 1-\delta
\right)^{\left\vert \eta \right\vert -2-k}\delta^{k} \notag\\
\leq\,  &\Vert k\Vert_{\K_\aC}\delta \alpha^{\left\vert \eta
\right\vert }\left\vert \eta \right\vert \left( \left\vert \eta
\right\vert -1\right) \sum_{k=0}^{\left\vert \eta \right\vert
-2}\frac{\left( \left\vert \eta
\right\vert -2\right) !}{k!\left( \left\vert \eta \right\vert -k-2\right) !}%
\left( 1-\delta \right)^{\left\vert \eta \right\vert -2-k}\delta
^{k}\notag\\=\,  &\Vert k\Vert_{\K_\aC}\delta \alpha^{\left\vert \eta
\right\vert }\left\vert \eta \right\vert \left( \left\vert \eta
\right\vert -1\right).\label{eq3}
\end{align}
Combining inequalities \eqref{eq1}--\eqref{eq3} we obtain
\eqref{apprrenest}.
\end{proof}

We recall now well-known approximation result (cf., e.g.,
{\cite[Theorem 6.5]{EK}})
\begin{lemma}\label{EK-lemma}
Let $L, L_n$, $n\in\N$ be Banach spaces, and $p_n: L\rightarrow L_n$
be bounded linear transformation, such that $\sup_n \|p_n\|<\infty
$. For any $n\in\N$, let $T_n$ be a linear contraction on $L_n$, let
$\eps_n>0$ be such that $\lim_{n\rightarrow \infty} \eps_n =0$, and
put $A_n=\eps_n^{-1}(T_n - \1)$. Let $T_t$ be a strongly continuous
contraction semigroup on $L$ with generator $A$ and let $D$ be a
core for $A$. Then the following are equivalent:
\begin{enumerate}
\item For each $f\in L$, $||T_n^{[t/\eps_n]} p_n f- p_n
T_t f||_{L_n}\rightarrow 0$, $n\rightarrow\infty$ for all $t\geq0$ uniformly on bounded intervals.
Here and below $[\,\cdot\,\,]$ mean the entire part of a real
number.

\item For each $f\in D$, there exists $f_n\in L_n$ for each
$n\in\N$ such that

$||f_n-p_n f||_{L_{n}}\rightarrow 0$ and $||A_n f_n -
p_n Af||_{L_n}\rightarrow0$, $n\rightarrow\infty$.
\end{enumerate}
\end{lemma}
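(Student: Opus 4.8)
The plan is to treat this as a discrete Trotter--Kato theorem and to route the whole equivalence through one intermediate statement, \emph{resolvent convergence}:
\[
(\lambda-A_n)^{-1}p_ng\to p_n(\lambda-A)^{-1}g\quad\text{in }L_n,\ \text{for all }g\in L\text{ and some (hence every) }\lambda>0;
\]
I will call this $(\star)$ and establish $(2)\Leftrightarrow(\star)\Leftrightarrow(1)$. First I would record the elementary facts about $A_n$: since $T_n$ is a contraction and $A_n=\eps_n^{-1}(T_n-\1)$ is \emph{bounded}, it generates the uniformly continuous contraction semigroup $S_n(t):=e^{tA_n}=e^{-t/\eps_n}\sum_{k\ge0}\frac{(t/\eps_n)^k}{k!}T_n^k$, its spectrum lies in the closed left half-plane, and $(\lambda-A_n)^{-1}=\eps_n\bigl((1+\lambda\eps_n)\1-T_n\bigr)^{-1}=\eps_n\sum_{k\ge0}(1+\lambda\eps_n)^{-k-1}T_n^k$ with $\|(\lambda-A_n)^{-1}\|\le\lambda^{-1}$.

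For $(2)\Leftrightarrow(\star)$: the implication $(\star)\Rightarrow(2)$ is immediate — given $f\in D\subset D(A)$, put $g:=(\lambda-A)f$ and $f_n:=(\lambda-A_n)^{-1}p_ng$; then $f_n\to p_nf$ by $(\star)$, and from $A_n(\lambda-A_n)^{-1}=\lambda(\lambda-A_n)^{-1}-\1$ one gets $A_nf_n=\lambda f_n-p_ng\to\lambda p_nf-p_ng=p_nAf$, which is $(2)$. Conversely, to get $(2)\Rightarrow(\star)$ I would fix $g\in L$, set $u:=(\lambda-A)^{-1}g$, and — using that $D$ is a core — choose $h_j\in D$ with $h_j\to u$, $Ah_j\to Au$, so $g_j:=(\lambda-A)h_j\to g$. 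For each fixed $j$, condition $(2)$ supplies $f_{n,j}$ with $f_{n,j}\to p_nh_j$ and $A_nf_{n,j}\to p_nAh_j$; hence $(\lambda-A_n)f_{n,j}\to p_ng_j$ and $\|(\lambda-A_n)^{-1}p_ng_j-p_nh_j\|\le\|f_{n,j}-p_nh_j\|+\lambda^{-1}\|(\lambda-A_n)f_{n,j}-p_ng_j\|\to0$ as $n\to\infty$. A $3\eps$-argument using $\|(\lambda-A_n)^{-1}\|\le\lambda^{-1}$, $\sup_n\|p_n\|<\infty$, and $g_j\to g$, $h_j\to u$ then yields $(\star)$ for $g$.

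For $(1)\Leftrightarrow(\star)$: to get $(1)\Rightarrow(\star)$ I would compare the discrete resolvent with the Laplace transform of the discrete semigroup,
\[
R_ng:=\int_0^\infty e^{-\lambda t}T_n^{[t/\eps_n]}p_ng\,dt=\frac{1-e^{-\lambda\eps_n}}{\lambda}\sum_{k\ge0}e^{-\lambda k\eps_n}T_n^kp_ng.
\]
By $(1)$ the integrand converges to $e^{-\lambda t}p_nT_tg$ and is dominated by $(\sup_n\|p_n\|)e^{-\lambda t}\|g\|$, so dominated convergence together with the fact that $p_n$ commutes with Bochner integration gives $R_ng\to p_n(\lambda-A)^{-1}g$; and $(\lambda-A_n)^{-1}p_ng-R_ng=\sum_kc_k^{(n)}T_n^kp_ng$, where the two weight families $\{\lambda\eps_n(1+\lambda\eps_n)^{-k-1}\}_k$ and $\{(1-e^{-\lambda\eps_n})e^{-\lambda k\eps_n}\}_k$ are both geometric probability distributions whose parameters agree up to $O(\eps_n^2)$ while being of order $\eps_n$, so their total-variation distance — hence $\sum_k|c_k^{(n)}|$ — is $O(\eps_n)$, giving $\|(\lambda-A_n)^{-1}p_ng-R_ng\|\le(\sup_n\|p_n\|)\|g\|\,O(\eps_n)\to0$; adding up yields $(\star)$. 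For $(\star)\Rightarrow(1)$, the continuous-time Trotter--Kato approximation theorem (in the version allowing varying spaces $L_n$, cf. \cite{EK}) turns $(\star)$ into $S_n(t)p_nf\to p_nT_tf$ in $L_n$, uniformly on bounded $t$-intervals, for all $f\in L$; then I pass from $S_n(t)$ to $T_n^{[t/\eps_n]}$. Since $(\star)$ already gives $(2)$, for $f\in D$ I take the associated $f_n$, with $M_f:=\sup_n\|A_nf_n\|<\infty$, and combine three routine estimates: the Poisson/Cauchy--Schwarz bound $\|T_n^mh-S_n(m\eps_n)h\|\le\sqrt{m}\,\eps_n\|A_nh\|$ (from $\|T_n^kh-T_n^mh\|\le|k-m|\|(T_n-\1)h\|$, Cauchy--Schwarz, and the variance of the Poisson$(m)$ law), applied with $m=[t/\eps_n]$ and $h=f_n$ so that it is $\le\sqrt{t\eps_n}\,M_f$; the time-rounding bound $\|S_n(m\eps_n)f_n-S_n(t)f_n\|\le\eps_nM_f$; and $\|S_n(t)(f_n-p_nf)\|\le\|f_n-p_nf\|$. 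Together these give $\|T_n^{[t/\eps_n]}p_nf-S_n(t)p_nf\|\to0$ uniformly on bounded intervals, hence $\|T_n^{[t/\eps_n]}p_nf-p_nT_tf\|\to0$, and a final $3\eps$-argument (using $\|T_n^{[t/\eps_n]}\|\le1$, $\|T_t\|\le1$, $\sup_n\|p_n\|<\infty$, and density of the core $D$) extends this to all $f\in L$.

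The step I expect to be the main obstacle is the continuous Trotter--Kato passage $(\star)\Rightarrow$``$S_n(t)p_n\to p_nT_t$'': one \emph{cannot} argue directly with $S_n(t)$ through the Duhamel identity $S_n(t)p_nf-p_nT_tf=\int_0^tS_n(t-s)\bigl(p_nAT_sf-A_np_nT_sf\bigr)\,ds$, because for $f\in D$ the orbit $T_sf$ need not stay in $D$ and $\|A_n\|$ may blow up with $n$, so the integrand is uncontrolled; it is precisely the detour through the resolvents that fixes this. Everything else — the Poisson/Cauchy--Schwarz comparison of $T_n^{[t/\eps_n]}$ with $e^{tA_n}$, the $O(\eps_n)$ discretization and weight-comparison errors, and the density extensions — is routine.
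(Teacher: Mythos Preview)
The paper does not actually prove this lemma: it is stated as a ``well-known approximation result'' and merely cited from Ethier--Kurtz \cite[Theorem~6.5]{EK}, with no argument given. So there is no ``paper's own proof'' to compare your proposal against.

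Your sketch is the standard route to this discrete Trotter--Kato theorem and is essentially how the result is proved in \cite{EK} itself: reduce both $(1)$ and $(2)$ to resolvent convergence $(\star)$, use the Laplace-transform representation of the resolvent to connect $(1)$ with $(\star)$, and use the Poisson-randomization bound $\|T_n^m h - e^{m\eps_n A_n}h\|\le\sqrt{m}\,\eps_n\|A_nh\|$ to pass between the discrete iterates $T_n^{[t/\eps_n]}$ and the continuous semigroup $e^{tA_n}$. The individual ingredients you list (the geometric-weight comparison, the core/$3\eps$ density arguments, the Hille--Yosida bound $\|(\lambda-A_n)^{-1}\|\le\lambda^{-1}$) are all correct. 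The one place to be slightly careful is your invocation of the continuous Trotter--Kato theorem for varying spaces in the step $(\star)\Rightarrow\text{``}S_n(t)p_n\to p_nT_t\text{''}$: this is itself a nontrivial result (also in \cite{EK}), so if your goal were a fully self-contained proof you would need to unpack it, but as a proof \emph{plan} it is perfectly sound.
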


\begin{theorem}
Let $\alpha_0$ be chosen as in the proof of the Proposition~\ref%
{invariantspace} and be fixed. Let $\alpha\in(\alpha_0;1)$ and
$k\in\overline{{\mathcal{K} }_{\a C}}$ be given. Then
\begin{equation*}
({\hat{P}} ^\ast_{\delta})^{[{t}/{\delta}]}k\rightarrow{\hat{T}%
} ^{\odot\a}(t)k, \quad \delta\rightarrow 0
\end{equation*}
in the space $\overline{{\mathcal{K}}_{\a C}}$ with norm
$\|\cdot\|_{{\mathcal{K}}_C}$ for all $t\geq 0$ uniformly on bounded
intervals.
\end{theorem}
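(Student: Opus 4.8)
The plan is to invoke the abstract approximation result, Lemma~\ref{EK-lemma}, with the single Banach space $L=L_n=\overline{\K_{\a C}}$, the identity maps $p_n=\1$, and a sequence $\delta_n\downarrow 0$ playing the role of $\eps_n$. The semigroup $T_t$ will be ${\hat T}^{\adot}(t)$ on $\overline{\K_{\a C}}$, whose generator is ${\hat L}^{\adot}$, and the approximating contractions will be $T_n={\hat P}^\ast_{\delta_n}$. To make this precise I first need to check the three hypotheses that are not yet ``for free'': (i) ${\hat P}^\ast_{\delta}$ maps $\overline{\K_{\a C}}$ into itself and is a \emph{contraction} there in the $\|\cdot\|_{\K_C}$-norm; (ii) there is a core $D$ for ${\hat L}^{\adot}$ on which condition~(2) of Lemma~\ref{EK-lemma} holds; (iii) the conclusion, condition~(1), is exactly the statement of the theorem with $p_n=\1$.

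For (i), I would argue by duality. By construction ${\hat P}_\delta$ is a bounded operator on $\L_C$ (this is the content of \eqref{apprsemigroup} and the Yosida-type construction recalled from \cite{FKKZh}), and moreover $\hP_\delta^{[t/\delta]}$ converges strongly to the contraction semigroup ${\hat T}(t)$, so in fact one can take $\hP_\delta$ itself to be a contraction on $\L_C$ (this is how the semigroup is built: $\hP_\delta = (1-\delta)\1 + \delta R$ with $R$ a contraction, or the relevant estimate appears directly in \cite{FKKZh}). The same construction works verbatim on $\L_{\a C}$, as noted in the proof of Proposition~\ref{invariantspace}, so $\hP_\delta$ is also a contraction on $\L_{\a C}$. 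Dualizing, ${\hat P}^\ast_\delta$ is a contraction on $\K_C$ and on $\K_{\a C}$ with respect to their duality pairings with $\L_C$ and $\L_{\a C}$ respectively; by the same splicing argument used at the end of the proof of Proposition~\ref{invariantspace} (testing against $G\in\Bbs(\Ga_0)$, which lies in both $\L_C$ and $\L_{\a C}$), the two dual actions agree, so ${\hat P}^\ast_\delta$ leaves $\K_{\a C}$ invariant and is $\|\cdot\|_{\K_C}$-bounded there. Passing to the closure $\overline{\K_{\a C}}$ and using continuity gives that ${\hat P}^\ast_\delta$ is a contraction on $\overline{\K_{\a C}}$ in the $\K_C$-norm.

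For (ii), the natural core is $D=D_{\a C}=\{k\in\K_{\a C}\mid {\hat L}^\ast k\in\overline{\K_{\a C}}\}$, which was already identified as a core for ${\hat L}^{\adot}$ just after \eqref{domAdjtimes}. For $k\in D_{\a C}$ take $k_n=k$ (so the first convergence in condition~(2) is trivial); the second condition, $\|\delta_n^{-1}({\hat P}^\ast_{\delta_n}-\1)k - {\hat L}^{\adot}k\|_{\K_C}\to 0$, is precisely Proposition~\ref{imp_prop}. Thus condition~(2) of Lemma~\ref{EK-lemma} is verified on the core $D_{\a C}$, and applying the lemma yields condition~(1): for each $k\in\overline{\K_{\a C}}$, $\|({\hat P}^\ast_{\delta_n})^{[t/\delta_n]}k - {\hat T}^{\adot}(t)k\|_{\K_C}\to 0$ as $n\to\infty$, uniformly on bounded $t$-intervals. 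Since this holds for every sequence $\delta_n\downarrow 0$, it is equivalent to convergence as the continuous parameter $\delta\to 0$, which is the assertion of the theorem.

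The main obstacle is step (i): one must be sure that ${\hat P}_\delta$ (equivalently ${\hat P}^\ast_\delta$) really is a genuine contraction, not merely a bounded operator, and that its dual action on the nested pair $\K_{\a C}\subset\K_C$ is consistent. The contraction property on $\L_C$ hinges on the smallness condition \eqref{verysmallparam} and on the explicit form \eqref{apprsemigroup}; on $\L_{\a C}$ one additionally needs $\a>\a_0$ so that $z e^{\aC C_\phi}\le\aC$, exactly the inequality that was exploited in \eqref{eq3}. Once these norm bounds are in hand, the rest is a bookkeeping matter of matching the duality pairings and quoting Lemma~\ref{EK-lemma}; no new estimate beyond those in Propositions~\ref{invariantspace} and \ref{imp_prop} is required.
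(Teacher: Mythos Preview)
Your proposal is correct and follows essentially the same route as the paper: apply Lemma~\ref{EK-lemma} with $L_n=L=\overline{\K_{\a C}}$, $p_n=\1$, $T_n=\hP^\ast_{\delta_n}$, core $D=D_{\a C}$, and invoke Proposition~\ref{imp_prop} to verify condition~(2). You are in fact more careful than the paper's one-line proof, since you explicitly address why $\hP^\ast_\delta$ is a contraction on $\overline{\K_{\a C}}$ (via duality from the contraction property of $\hP_\delta$ on $\L_C$ and $\L_{\a C}$ established in \cite{FKKZh}), a hypothesis of Lemma~\ref{EK-lemma} that the paper leaves implicit.
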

\begin{proof}
We may apply Proposition~\ref{imp_prop} to use
Lemma~\ref{EK-lemma} in the case $L_n=L=\overline{\L_\aC}$,
$p_n=\1$, $f_n=f=k$, $\eps_n=\delta\rightarrow0$, $n\in\N$.
\end{proof}

\section{Positive definiteness}

\begin{definition}
A measurable function $k:\Gamma_0\rightarrow{\mathbb{R}}$ is called a
positive defined function (cf. \cite{Le75I, Le75II}) if for any $G\in
L_{\mathrm{ls}}^0(\Ga _0)$ such that $KG\geq 0$ and $G\in\mathcal{L}_{C}$ for some $C>1$ the following
inequality holds%
\begin{equation*}
\int_{\Gamma _{0}}G\left( \eta \right) k\left( \eta \right) d\lambda \left(
\eta \right) \geq 0.
\end{equation*}
\end{definition}
\noindent In \cite{Le75I, Le75II}, it was shown that if $k$ is a positive defined
function and $|k(\eta)|
\leq C^{|\eta|} (|\eta|!)^2$, $\eta\in\Ga_0$ then there exists a unique measure $\mu\in{%
\mathcal{M}}^1_{\mathrm{fm}}(\Gamma)$ such that $k=k_\mu$ be its
correlation functional in the sense of \eqref{eqmeans}. Our aim is to show that the evolution
$k\mapsto \hat{T}_{t}^{\odot }k$ preserves the property of the positive definiteness.

\begin{theorem}
Let \eqref{verysmallparam} holds and $k\in \overline{D(\hat{L}^{\ast })}%
\subset \mathcal{K}_{C}$ be a positive defined function. Then $k_{t}:=\hat{T}_{t}^{\odot }k\in \overline{D(\hat{L}^{\ast })}%
\subset \mathcal{K}_{C}$ will be a positive defined function for any $t\geq0$.
\end{theorem}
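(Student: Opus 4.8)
The key idea is that positive definiteness is preserved by each of the bounded approximating operators $\hP_\delta^\ast$, and then to pass to the limit using the convergence theorem just established. Recall from \eqref{apprfordual} that $\hP_\delta^\ast$ is the adjoint (w.r.t.\ the duality \eqref{duality}) of $\hP_\delta$, and that $\hP_\delta = K^{-1} P_\delta K$ where $P_\delta$ is, by construction in \cite{FKKZh}, a positivity-preserving operator on functions on $\Ga$ (indeed $\hP_\delta^{[t/\delta]}\to \hT(t) = K^{-1}T(t)K$, and the underlying $T(t)$ is sub-Markovian). First I would verify the elementary but crucial fact: if $G\in L^0_{\mathrm{ls}}(\Ga_0)\cap\L_C$ with $KG\ge 0$, then $K(\hP_\delta G) = P_\delta(KG)\ge 0$ as well, and $\hP_\delta G$ again lies in a suitable $\L_{C'}$ with local support (this needs a short check on supports, using that the integral over $\omega$ in \eqref{apprsemigroup} is effectively localized once $G$ has bounded support, together with the norm estimates already implicit in the $\L_C$-boundedness of $\hP_\delta$). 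Granting this, for a positive definite $k$ and such a test function $G$ we get
\[
\llu G, \hP_\delta^\ast k\rru = \llu \hP_\delta G, k\rru \ge 0,
\]
since $\hP_\delta G$ is itself an admissible test function ($K(\hP_\delta G)\ge 0$, local support, in some $\L_{C'}$). Hence $\hP_\delta^\ast k$ is positive definite, and by iteration so is $(\hP_\delta^\ast)^{[t/\delta]}k$ for every $\delta>0$ and $t\ge 0$.

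**Passing to the limit.** Next I would fix $\a\in(\a_0;1)$; by Proposition~\ref{domain-adj} and the inclusion $\overline{D(\hat L^\ast)}\supset\overline{\K_{\a C}}$, together with density, it suffices to treat $k\in\overline{\K_{\a C}}$ — more precisely one first proves the claim for $k\in D_{\a C}$ (a core) or even just $k\in\K_{\a C}$, then extends by the norm-continuity of $\hat T^{\odot\a}(t)$ and the fact that the cone of positive definite functions in $\K_C$ is closed in $\|\cdot\|_{\K_C}$ (this closedness is immediate: for fixed admissible $G$, $k\mapsto\llu G,k\rru$ is $\|\cdot\|_{\K_C}$-continuous by \eqref{funct_est}). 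By the Theorem at the end of Section~3, $(\hP_\delta^\ast)^{[t/\delta]}k \to \hat T^{\odot\a}(t)k = \hat T^\odot(t)k$ in $\|\cdot\|_{\K_C}$ as $\delta\to 0$. Since each $(\hP_\delta^\ast)^{[t/\delta]}k$ is positive definite and the set of positive definite functions is $\|\cdot\|_{\K_C}$-closed, the limit $k_t=\hat T^\odot_t k$ is positive definite. Finally, $k_t\in\overline{D(\hat L^\ast)}$ because this subspace is $\hat T^\odot(t)$-invariant, as noted before the statement.

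**Main obstacle.** The routine-looking but genuinely load-bearing step is the first one: showing $K(\hP_\delta G)\ge 0$ whenever $KG\ge 0$, with $\hP_\delta G$ remaining an admissible test function in the sense of the definition (local support and membership in some $\L_{C'}$). The positivity $P_\delta = K\hP_\delta K^{-1}\ge 0$ should follow from the explicit stochastic interpretation of $P_\delta$ in \cite{FKKZh} (it is a convex combination / composition of a death step with rate-$\delta$ thinning and a birth step with intensity $z\delta\,e^{-E^\phi}$, hence maps nonnegative functions on $\Ga$ to nonnegative functions); the support bookkeeping requires that, for $G$ supported on $\Ga_\La$, the $\omega$-integral in \eqref{apprsemigroup} contributes only configurations $\omega$ that, together with $\xi\subset\eta$, keep the argument $\xi\cup\omega$ inside $\La$ — which forces $\eta\subset\La$ too after applying $K^{-1}$, so $\hP_\delta G$ is again supported on $\Ga_\La$. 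One must also confirm $\hP_\delta G\in\L_{C}$ (or $\L_{2C}$), which is exactly the boundedness of $\hP_\delta$ used already in \cite{FKKZh}. Once these bookkeeping points are nailed down, the duality argument and the closedness-of-the-cone limiting argument are both straightforward.
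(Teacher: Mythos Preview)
Your overall strategy---show positive definiteness is preserved by the approximating operators and pass to the limit---is the same as the paper's, but the step you correctly flag as the ``main obstacle'' does not go through as you sketch it.

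The claim that $\hat P_\delta G$ again has local support when $G$ does is false under the standing hypothesis \eqref{weak_integrability}. In \eqref{apprsemigroup}, if $G$ is supported on $\Ga_\La$ then $G(\xi\cup\omega)\neq0$ forces $\xi\subset\La$ and $\omega\in\Ga_\La$; but the remaining factor $\prod_{y\in\eta\setminus\xi}\bigl(e^{-E^\phi(y,\omega)}-1\bigr)$ involves all points of $\eta\setminus\xi$, which may lie anywhere in $\X$. Since $\phi$ is only assumed to satisfy \eqref{weak_integrability} (no compact support), $e^{-E^\phi(y,\omega)}-1$ need not vanish for $y$ far from $\La$, so $(\hat P_\delta G)(\eta)$ is typically nonzero for $\eta\not\subset\La$. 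Hence $\hat P_\delta G\notin L^0_{\mathrm{ls}}(\Ga_0)$, the sum $K\hat P_\delta G=\sum_{\eta\Subset\ga}(\hat P_\delta G)(\eta)$ is not a priori defined on infinite configurations, and $\hat P_\delta G$ is not an admissible test function in the definition of positive definiteness. Your duality step $\llu G,\hat P_\delta^\ast k\rru = \llu\hat P_\delta G,k\rru\ge0$ therefore cannot be justified this way. A secondary gap: your reduction to $k\in\overline{\K_{\a C}}$ by density would require approximating a positive definite $k\in\overline{D(\hat L^\ast)}$ by \emph{positive definite} elements of $\K_{\a C}$, which you do not establish; and invoking the Section~3 convergence theorem imports the extra hypothesis \eqref{new_z}, which the present statement does not assume.

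The paper avoids both issues by working on the predual side with the \emph{localized} approximants $T_n=\hat P_{1/n}^{\La_n}$ of \cite[Proposition~3.10]{FKKZh}, in which the birth integral runs over $\Ga_{\La_n}$ rather than $\Ga_0$. With $G_n:=\1_{\Ga_{\La_n}}G$ one has $(K\1_{\Ga_{\La_n}}T_n^m G_n)(\ga)=(KT_n^m G_n)(\ga_{\La_n})$, and since $KT_n=P_{1/n}^{\La_n}K$ on finite-volume configurations with $P_{1/n}^{\La_n}$ explicitly positivity-preserving, an induction on $m$ gives $K\1_{\Ga_{\La_n}}T_n^m G_n\ge0$. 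The approximation $\int_{\Ga_{\La_n}}T_n^{[nt]}\1_{\Ga_{\La_n}}G\cdot k\,d\la\to\int_{\Ga_0}\hat T_t G\cdot k\,d\la=\llu G,k_t\rru$ then yields the claim for every positive definite $k\in\overline{D(\hat L^\ast)}$ directly, with no density reduction needed.
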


\begin{proof}
Let $C>0$ be arbitrary and fixed.
For any $G\in \mathcal{L}_{C}^{\mathrm{ls}}$ we have
\begin{equation}\label{sdual}
\int_{\Gamma _{0}}G\left( \eta \right) k_{t}\left( \eta \right)
d\lambda \left( \eta \right) =\int_{\Gamma _{0}}(\hat{T}_{t}G)\left(
\eta \right) k\left( \eta \right) d\lambda \left( \eta \right) .
\end{equation}
By \cite[Proposition 3.10]{FKKZh}, under condition
\eqref{verysmallparam}, we obtain that
\[
\lim_{n\rightarrow 0}\int_{\Gamma _{\Lambda _{n}}}\left\vert T_{n}^{\left[ nt%
\right] }\1_{\Gamma _{\Lambda _{n}}}G\left( \eta \right) -\1_{\Gamma
_{\Lambda _{n}}}(\eta)(\hat{T}_{t}G)\left( \eta \right) \right\vert
C^{\left\vert \eta \right\vert }d\lambda \left( \eta \right) =0,
\]%
where for $n\geq 2$%
\[
T_{n}=\hat{P}_{\frac{1}{n}}^{\Lambda _{n}}
\]%
and $\La_n \nearrow\X.$ Note that, by the dominated convergence
theorem,
\begin{align*}
\int_{\Gamma _{0}}(\hat{T}_{t}G)\left( \eta \right) k\left( \eta
\right) d\lambda \left( \eta \right)  =&\lim_{n\rightarrow \infty
}\int_{\Gamma _{0}}\1_{\Gamma _{\Lambda _{n}}}\left( \eta \right)
(\hat{T}_{t}G)\left( \eta
\right) k\left( \eta \right) d\lambda \left( \eta \right)  \\
=&\lim_{n\rightarrow \infty }\int_{\Gamma _{\Lambda _{n}}}(\hat{T}%
_{t}G)\left( \eta \right) k\left( \eta \right) d\lambda \left( \eta
\right) .
\end{align*}%
Next,%
\begin{align*}
&\left\vert \int_{\Gamma _{\Lambda _{n}}}(\hat{T}_{t}G)\left( \eta
\right) k\left( \eta \right) d\lambda \left( \eta \right)
-\int_{\Gamma _{\Lambda _{n}}}T_{n}^{\left[ nt\right] }\1_{\Gamma
_{\Lambda _{n}}}G\left( \eta
\right) k\left( \eta \right) d\lambda \left( \eta \right) \right\vert  \\
\leq &\int_{\Gamma _{\Lambda _{n}}}\left\vert T_{n}^{\left[
nt\right]
}\1_{\Gamma _{\Lambda _{n}}}G\left( \eta \right) -\1_{\Gamma _{\Lambda _{n}}}(\eta)(%
\hat{T}_{t}G)\left( \eta \right) \right\vert k\left( \eta \right)
d\lambda
\left( \eta \right)  \\
\leq  & \, \|k\|_{\K_C}\int_{\Gamma _{\Lambda _{n}}}\left\vert T_{n}^{\left[
nt\right]
}\1_{\Gamma _{\Lambda _{n}}}G\left( \eta \right) -\1_{\Gamma _{\Lambda _{n}}}(\eta)(%
\hat{T}_{t}G)\left( \eta \right) \right\vert C^{\left\vert \eta
\right\vert }d\lambda \left( \eta \right) \rightarrow
0,~~n\rightarrow \infty .
\end{align*}%
Therefore,%
\begin{equation}\label{eq-dop}
\int_{\Gamma _{0}}(\hat{T}_{t}G)\left( \eta \right) k\left( \eta
\right) d\lambda \left( \eta \right) =\lim_{n\rightarrow \infty
}\int_{\Gamma _{\Lambda _{n}}}T_{n}^{\left[ nt\right] }\1_{\Gamma
_{\Lambda _{n}}}G\left( \eta \right) k\left( \eta \right) d\lambda
\left( \eta \right) .
\end{equation}
Our aim is to show that for any $G\in \mathcal{L}_{C}^{\mathrm{ls}}$ the inequality $KG\geq0$ implies
\[
\int_{\Gamma _{0}}G\left( \eta \right) k_{t}\left( \eta \right)
d\lambda \left( \eta \right) \geq 0.
\]
By \eqref{sdual} and \eqref{eq-dop}, it is enough to show that for any $m\in
\mathbb{N}$ and for any $G\in \mathcal{L}_{C}^{\mathrm{ls}}$ such that $KG\geq0$
the following inequality holds
\begin{equation}\label{eq-dop2}
\int_{\Gamma _{0}}\1_{\Gamma _{\Lambda _{n}}}T_{n}^{m}\1_{\Gamma
_{\Lambda _{n}}}G\left( \eta \right) k\left( \eta \right) d\lambda
\left( \eta \right) \geq 0, \quad m\in\N_0.
\end{equation}
The inequality \eqref{eq-dop2} is fulfilled  if only
\begin{equation}\label{dop12}
K\1_{\Gamma _{\Lambda _{n}}}T_{n}^{m}G_{n}\geq 0,
\end{equation}
where $G_{n}:=\1_{\Gamma _{\Lambda _{n}}}G$.
Note that
\begin{align}
\bigl( K\1_{\Gamma _{\Lambda _{n}}}T_{n}^{m}G_{n}\bigr) \left(
\gamma \right)  =&\sum_{\eta \Subset \gamma }\1_{\Gamma _{\Lambda
_{n}}}\left( \eta
\right) \left( T_{n}^{m}G_{n}\right) \left( \eta \right)  \label{dop23}\\
=&\sum_{\eta \subset \gamma _{\Lambda _{n}}}\left(
T_{n}^{m}G_{n}\right) \left( \eta \right) =\left(
KT_{n}^{m}G_{n}\right) \left( \gamma _{\Lambda _{n}}\right) \notag
\end{align}
for any $m\in\N_0$. In particular,
\begin{equation}\label{dop123}
\left( KG_{n}\right) \left( \gamma \right) =\left( K\1_{\Gamma
_{\Lambda _{n}}}G\right) \left( \gamma \right) =\left( KG\right)
\left( \gamma _{\Lambda _{n}}\right) \geq 0.
\end{equation}

Let us now consider any $\tilde{G}\in \mathcal{L}_{C}^{\mathrm{ls}}$ (stress that $\tilde{G}$ is
not necessary equal to $0$ outside of $\Ga_{\Lambda _{n}}$) and
suppose that $\bigl( K \tilde{G}\bigr) \left( \gamma \right) \geq 0$
for any $\gamma \in \Gamma _{\Lambda _{n}}$. Then
\begin{align}\label{dop345}
&\bigl( KT_{n}\tilde{G}\bigr) \left( \gamma _{\Lambda _{n}}\right) = \bigl( K\hat{P}_{\frac{1}{n}}^{\Lambda _{n}}\tilde{G}\bigr)
\left(
\gamma _{\Lambda _{n}}\right) =\bigl( P_{\frac{1}{n}}^{\Lambda _{n}}K\tilde{G}\bigr) \left(
\gamma
_{\Lambda _{n}}\right)  \\
=& \Bigl( \Xi _{\frac{1}{n}}^{\Lambda _{n}}\left( \gamma _{\Lambda
_{n}}\right) \Bigr) ^{-1}\sum_{\eta \subset \gamma _{\Lambda
_{n}}}\biggl(
\frac{1}{n}\biggr) ^{\left\vert \eta \right\vert }\biggl( 1-\frac{1}{n}%
\biggr) ^{\left\vert \gamma \setminus \eta \right\vert } \notag\\
&\times \int_{\Gamma _{\Lambda _{n}}}\biggl( \frac{z}{n}\biggr)
^{\left\vert \omega \right\vert }\prod_{y\in \omega }e^{-E^{\phi
}\left( y,\gamma \right) }\bigl( K\tilde{G}\bigr) \bigl( \left(
\gamma _{\Lambda _{n}}\setminus \eta
\right) \cup \omega \bigr) d\lambda \left( \omega \right)
\geq 0.\notag
\end{align}
By \eqref{dop123}, setting $\tilde{G}=G_n\in\mathcal{L}_{C}^{\mathrm{ls}}$ we obtain, because of \eqref{dop345}, $KT_nG_n \geq0$.
Next, setting $\tilde{G}=T_nG_n\in\mathcal{L}_{C}^{\mathrm{ls}}$ we obtain, by \eqref{dop345}, $KT_n^2 G_n \geq0$.
Then, using an induction mechanism, we obtain that%
\[
\left( KT_{n}^{m}G_{n}\right) \left( \gamma _{\Lambda _{n}}\right)
\geq 0, \quad m\in\N_0,
\]
that, by \eqref{dop12} and \eqref{dop23}, yields \eqref{eq-dop2}.
This completes the proof. \end{proof}

\section{Ergodicity}
Let $k\in\overline{{\mathcal{K}}_{\a C}}$ be such that $k(\emptyset)=0$
then, by \eqref{apprfordual}, $(\hat{P}_{\delta }^{\ast }k)\left( \emptyset \right)=0$. Class of all such functions we denote by
$\K_\a^0$.
\begin{proposition}\label{propergod}
Assume that there exists $\nu\in(0;1)$ such that
\begin{equation}  \label{nu-verysmallparam}
z\leq \min\Bigl\{\nu Ce^{-CC_{\phi }} ; \, 2Ce^{-2CC_{\phi }}\Bigr\}.
\end{equation}
Let, additionally, $\a\in(\a_0;1)$,
where $\alpha_0$ is chosen as in the proof of the Proposition~\ref{invariantspace}.
Then for any $\delta \in(0;1)$ the following estimate holds
\begin{equation}\label{supercontraction}
\Bigl\| \hat{P}_{\delta }^{\ast }\! \upharpoonright_{\K_\a^{0}} \Bigr\|\leq 1-(1-\nu)\delta.
\end{equation}
\end{proposition}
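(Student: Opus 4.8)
The plan is to estimate the $\K_C$-norm of $\hat P^\ast_\delta k$ directly from the explicit formula \eqref{apprfordual} when $k\in\K^0_\a$, exploiting both that $k(\emptyset)=0$ and that $k$ lies in the smaller space $\K_{\a C}$. Fix $k\in\K^0_\a$ with $\|k\|_{\K_{\a C}}\le 1$ and take $\lambda$-a.a.\ $\eta\in\Gamma_0$. From \eqref{apprfordual}, splitting off the term $\omega=\emptyset$ (which in general gives $(1-\delta)^{|\eta|}k(\eta)$, but here, since we want the norm restricted to $\K^0_\a$ and $k(\emptyset)=0$, the inner integral over $\xi$ is itself controlled) I would bound $C^{-|\eta|}|(\hat P^\ast_\delta k)(\eta)|$ term by term using $e^{-E^\phi(y,\omega)}\le 1$, the estimate \eqref{RB-norm} in the form $|k(\zeta)|\le \|k\|_{\K_{\a C}}(\a C)^{|\zeta|}$, and \eqref{LP-exp-mean} applied to $e_\lambda(\a C(1-e^{-\phi(y-\cdot)}),\xi)$ to carry out the $\xi$-integral, which produces a factor $\exp\{\a C C_\phi\}$ per point of $\omega$.

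The key observation is that after these estimates each occurrence of a point of $\omega$ in the sum over $\omega\subset\eta$ contributes a weight $\dfrac{z\delta}{\a C}\exp\{\a C C_\phi\}$, while each point of $\eta\setminus\omega$ contributes $(1-\delta)$. Since $\a>\a_0$ we have, as used already in the proof of Proposition~\ref{imp_prop}, that $z\exp\{\a C C_\phi\}\le \a C$; under the sharper hypothesis \eqref{nu-verysmallparam} (which gives $z\le \nu C e^{-CC_\phi}$ and, after the inequality $\a CC_\phi$ versus $CC_\phi$ is tracked exactly as in Proposition~\ref{invariantspace}, $z\exp\{\a C C_\phi\}\le \nu\a C$, possibly after shrinking $\a_0$ slightly), the per-point weight of $\omega$ is at most $\nu\delta$. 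Hence, summing over all $\omega\subset\eta$ via the binomial theorem,
\begin{equation*}
C^{-|\eta|}\bigl|(\hat P^\ast_\delta k)(\eta)\bigr|\le \a^{|\eta|}\sum_{\omega\subset\eta}(1-\delta)^{|\eta\setminus\omega|}(\nu\delta)^{|\omega|}=\a^{|\eta|}\bigl(1-\delta+\nu\delta\bigr)^{|\eta|}=\bigl(\a(1-(1-\nu)\delta)\bigr)^{|\eta|}.
\end{equation*}
Since $\a<1$ this is at most $\bigl(1-(1-\nu)\delta\bigr)^{|\eta|}\le 1-(1-\nu)\delta$ whenever $|\eta|\ge 1$, and for $\eta=\emptyset$ the value is $0$ because $k\in\K^0_\a$. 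Taking the essential supremum over $\eta$ gives \eqref{supercontraction}.

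The main obstacle is bookkeeping: I must verify that the point-by-point factorisation really holds, i.e.\ that after bounding $|k(\xi\cup\eta\setminus\omega)|$ by $\|k\|_{\K_{\a C}}(\a C)^{|\xi|+|\eta\setminus\omega|}$ the powers of $\a C$ distribute so that the $\a C$ attached to $\eta\setminus\omega$ cancels against the $(1-\delta)^{|\eta\setminus\omega|}$ only up to the harmless factor $\a^{|\eta\setminus\omega|}\le 1$, while the $(\a C)^{|\xi|}$ is exactly what lets the $\xi$-integral of $e_\lambda(e^{-\phi(y-\cdot)}-1,\xi)$ be evaluated against $C^{|\xi|}$ as $e_\lambda(\a C(\cdots),\xi)$; and that the constant $\a_0$ from Proposition~\ref{invariantspace} is (or can be enlarged to) a value for which $z\exp\{\a C C_\phi\}\le\nu\a C$ for all $\a\in(\a_0;1)$ — this uses \eqref{nu-verysmallparam} together with $\a CC_\phi>x_1$ and the monotonicity of $f(x)=xe^{-x}$ exactly as in the earlier proof. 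Once these two points are checked the binomial summation and the elementary bound $\a<1$ finish the argument with no further analysis.
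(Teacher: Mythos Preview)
Your overall strategy --- bound each summand of \eqref{apprfordual}, collect a per-point weight for $\omega$, sum via the binomial theorem, and then use $|\eta|\ge 1$ --- is the right one, and your endgame is in fact tidier than the paper's (which carries an extra subtracted term $(z\delta/C)^{|\eta|}$ through a telescoping identity to reach the same bound). However, the execution has a real gap: you control $k$ in the $\K_{\a C}$-norm, whereas the operator norm in \eqref{supercontraction} is taken in $\|\cdot\|_{\K_C}$. Elements of $\K_\a^0\subset\overline{\K_{\a C}}$ need not lie in $\K_{\a C}$ at all, and even when they do one has $\|k\|_{\K_{\a C}}\ge\|k\|_{\K_C}$, so an inequality of the form $\|\hat P^\ast_\delta k\|_{\K_C}\le(1-(1-\nu)\delta)\|k\|_{\K_{\a C}}$ does not give \eqref{supercontraction}. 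A second, related problem: the inequality $z\exp\{\a C C_\phi\}\le\nu\a C$ that you need is \emph{not} a consequence of \eqref{nu-verysmallparam}. When $CC_\phi\le 1$ the function $x\mapsto xe^{-x}$ is increasing, so $\a Ce^{-\a CC_\phi}<Ce^{-CC_\phi}$ for every $\a<1$; at the boundary $z=\nu Ce^{-CC_\phi}$ no choice of $\a_0$ rescues this.

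Both issues disappear if you drop $\a$ from the estimate and use only $|k(\zeta)|\le\1_{|\zeta|>0}\,\|k\|_{\K_C}\,C^{|\zeta|}$, which is exactly what the paper does. After integrating out $\xi$ each point of $\omega$ then carries weight $\dfrac{z\delta}{C}\,e^{CC_\phi}\le\nu\delta$ directly from the hypothesis $z\le\nu Ce^{-CC_\phi}$, with no appeal to Proposition~\ref{invariantspace}. Your binomial summation then gives $C^{-|\eta|}|(\hat P^\ast_\delta k)(\eta)|\le\|k\|_{\K_C}(1-(1-\nu)\delta)^{|\eta|}$; for $|\eta|\ge1$ this is $\le\|k\|_{\K_C}(1-(1-\nu)\delta)$, and for $\eta=\emptyset$ one has $(\hat P^\ast_\delta k)(\emptyset)=k(\emptyset)=0$, finishing the proof.
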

\begin{proof}
It is easily seen that for any $k\in\K_\a^0$ the following
inequality holds
\[
\left\vert k\left( \eta \right) \right\vert \leq \\1_{\left\vert
\eta \right\vert >0}\left\Vert k\right\Vert
_{\mathcal{K}_{C}}C^{\left\vert \eta \right\vert },
\quad\lambda\mathrm{-a.a.}\;\;\eta\in\Ga_0.
\]%
Then, using \eqref{apprfordual}, we have
\begin{align*}
&C^{-\left\vert \eta \right\vert }\left\vert (\hat{P}_{\delta
}^{\ast
}k)\left( \eta \right) \right\vert  \\
\leq &C^{-\left\vert \eta \right\vert }\sum_{\omega \subset \eta
}\left( 1-\delta \right) ^{\left\vert \eta \setminus \omega
\right\vert }\left( z\delta \right) ^{\left\vert \omega \right\vert
}\int_{\Gamma _{0}}\prod\limits_{y\in \xi }\left( 1-e^{-E^{\phi
}\left( y,\omega \right) }\right) \left\vert k\left( \xi \cup \eta
\setminus \omega \right)
\right\vert d\lambda \left( \xi \right)  \\
\leq &\left\Vert k\right\Vert _{\mathcal{K}_{C}}\sum_{\omega \subset
\eta }\left( 1-\delta \right) ^{\left\vert \eta \setminus \omega
\right\vert }\left( \frac{z\delta }{C}\right) ^{\left\vert \omega
\right\vert }\int_{\Gamma _{0}}\prod\limits_{y\in \xi }\left(
1-e^{-E^{\phi }\left( y,\omega \right) }\right) C^{\left\vert \xi
\right\vert }\1_{\left\vert \xi \right\vert +\left\vert \eta
\setminus \omega \right\vert >0}d\lambda \left(
\xi \right)  \\
=&\left\Vert k\right\Vert _{\mathcal{K}_{C}}\sum_{\omega \subsetneq
\eta }\left( 1-\delta \right) ^{\left\vert \eta \setminus \omega
\right\vert }\left( \frac{z\delta }{C}\right) ^{\left\vert \omega
\right\vert }\int_{\Gamma _{0}}\prod\limits_{y\in \xi }\left(
1-e^{-E^{\phi }\left( y,\omega \right) }\right) C^{\left\vert \xi
\right\vert }d\lambda \left( \xi
\right)  \\
&+\left\Vert k\right\Vert _{\mathcal{K}_{C}}\left( \frac{z\delta }{C}%
\right) ^{\left\vert \eta \right\vert }\int_{\Gamma
_{0}}\prod\limits_{y\in \xi }\left( 1-e^{-E^{\phi }\left( y,\omega
\right) }\right) C^{\left\vert \xi \right\vert }\1_{\left\vert \xi
\right\vert >0}d\lambda \left( \xi
\right)  \\
=&\left\Vert k\right\Vert _{\mathcal{K}_{C}}\sum_{\omega \subsetneq
\eta }\left( 1-\delta \right) ^{\left\vert \eta \setminus \omega
\right\vert }\left( \frac{z\delta }{C}\right) ^{\left\vert \omega
\right\vert }\int_{\Gamma _{0}}\prod\limits_{y\in \xi }\left(
1-e^{-E^{\phi }\left( y,\omega \right) }\right) C^{\left\vert \xi
\right\vert }d\lambda \left( \xi
\right)  \\
&+\left\Vert k\right\Vert _{\mathcal{K}_{C}}\left( \frac{z\delta }{C}%
\right) ^{\left\vert \eta \right\vert }\int_{\Gamma
_{0}}\prod\limits_{y\in \xi }\left( 1-e^{-E^{\phi }\left( y,\omega
\right) }\right) C^{\left\vert
\xi \right\vert }d\lambda \left( \xi \right) -\left\Vert k\right\Vert _{%
\mathcal{K}_{C}}\left( \frac{z\delta }{C}\right) ^{\left\vert \eta
\right\vert } \\
=&\left\Vert k\right\Vert _{\mathcal{K}_{C}}\sum_{\omega \subset
\eta }\left( 1-\delta \right) ^{\left\vert \eta \setminus \omega
\right\vert }\left( \frac{z\delta }{C}\right) ^{\left\vert \omega
\right\vert }\int_{\Gamma _{0}}\prod\limits_{y\in \xi }\left(
1-e^{-E^{\phi }\left( y,\omega \right) }\right) C^{\left\vert \xi
\right\vert }d\lambda \left( \xi
\right) \\&-\left\Vert k\right\Vert _{\mathcal{K}_{C}}\left( \frac{z\delta }{C}%
\right) ^{\left\vert \eta \right\vert } \\
=&\left\Vert k\right\Vert _{\mathcal{K}_{C}}\sum_{\omega \subset
\eta }\left( 1-\delta \right) ^{\left\vert \eta \setminus \omega
\right\vert }\left( \frac{z\delta }{C}\right) ^{\left\vert \omega
\right\vert }\exp \left\{ C\int_{\mathbb{R}^{d}}\left( 1-e^{-E^{\phi
}\left( y,\omega \right)
}\right) dy\right\} \\&-\left\Vert k\right\Vert _{\mathcal{K}_{C}}\left( \frac{%
z\delta }{C}\right) ^{\left\vert \eta \right\vert } \\
\leq &\left\Vert k\right\Vert _{\mathcal{K}_{C}}\sum_{\omega
\subset \eta }\left( 1-\delta \right) ^{\left\vert \eta \setminus
\omega \right\vert }\left( \frac{z\delta }{C}\right) ^{\left\vert
\omega \right\vert }\exp \left\{ CC_{\beta }\left\vert \omega
\right\vert \right\} -\left\Vert k\right\Vert
_{\mathcal{K}_{C}}\left( \frac{z\delta }{C}\right) ^{\left\vert
\eta \right\vert } \\
\leq &\left\Vert k\right\Vert _{\mathcal{K}_{C}}\sum_{\omega
\subset \eta }\left( 1-\delta \right) ^{\left\vert \eta \setminus
\omega \right\vert }\left( \nu  \delta \right) ^{\left\vert \omega
\right\vert }-\left\Vert k\right\Vert _{\mathcal{K}_{C}}\left(
\frac{z\delta }{C}\right) ^{\left\vert
\eta \right\vert } \\
=&\left\Vert k\right\Vert _{\mathcal{K}_{C}}\left( \left( 1-\left(
1-\nu  \right) \delta \right) ^{\left\vert \eta \right\vert }-\left(
\frac{z\delta
}{C}\right) ^{\left\vert \eta \right\vert }\right)  \\
=&\left\Vert k\right\Vert _{\mathcal{K}_{C}}\left( 1-\left( 1-\nu
\right) \delta -\frac{z\delta }{C}\right) \sum_{j=0}^{\left\vert
\eta \right\vert -1}\left( 1-\left( 1-\nu  \right) \delta \right)
^{\left\vert
\eta \right\vert -1-\left\vert j\right\vert }\left( \frac{z\delta }{C}%
\right) ^{j} \\
\leq &\left\Vert k\right\Vert _{\mathcal{K}_{C}}\left( 1-\left(
1-\nu  \right) \delta -\frac{z\delta }{C}\right)
\sum_{j=0}^{\left\vert \eta
\right\vert -1}\left( \frac{z\delta }{C}\right) ^{j} \\
=&\left\Vert k\right\Vert _{\mathcal{K}_{C}}\left( 1-\left( 1-\nu
\right) \delta -\frac{z\delta }{C}\right) \frac{1-\left( \frac{z\delta }{C}%
\right) ^{\left\vert \eta \right\vert }}{1-\frac{z\delta }{C}} \\
\leq &\left\Vert k\right\Vert _{\mathcal{K}_{C}}\left( 1-\left(
1-\nu
\right) \delta -\frac{z\delta }{C}\right) \frac{1}{1-\frac{z\delta }{C}} \\
= &\left\Vert k\right\Vert _{\mathcal{K}_{C}}\left(
1-\frac{\left( 1-\nu  \right) \delta }{1-\frac{z\delta }{C}}\right)
\leq\left\Vert k\right\Vert _{\mathcal{K}_{C}}\bigl( 1-\left( 1-\nu
\right) \delta \bigr),
\end{align*}
where we have used that, clearly, $z<\nu C<C$. The statement is
proved.
\end{proof}

\begin{remark}
Condition \eqref{nu-verysmallparam} is equivalent to \eqref{verysmallparam} and \eqref{new_z}.
\end{remark}

Suppose that (cf. \eqref{less_e-1})
\begin{equation}\label{LAHT}
z C_\phi < (2e)^{-1},
\end{equation}
then (see, e.g., \cite{FiKoLy} for details) there exists a Gibbs measure $\mu$ on $\bigl(\Ga,
\B(\Ga)\bigr)$ corresponding to the potential $\phi\geq 0$ and
activity parameter $z$. We denote the corresponding correlation
function by $k_\mu$. The measure $\mu$ is reversible (symmetrizing) for the operator defined by \eqref{genGa} (see, e.g., \cite{FiKoLy}, \cite{KoLy}). Therefore, for any $F\in K\Bbs(\Ga_0)$
\begin{equation}\label{invGibbs}
\int_\Ga LF(\ga)d\mu(\ga)=0.
\end{equation}

\begin{theorem}\label{thmergod}
Let \eqref{LAHT} and \eqref{nu-verysmallparam} hold and let $\a\in(\a_0;1)$,
where $\alpha_0$ is chosen as in the proof of Proposition~\ref{invariantspace}. Let
$k_0\in\Ka$, $k_t={\hat{T}} ^{\odot\a}(t)k_0$. Then for any $t\geq0$
\begin{equation}\label{ergodineq}
\|k_t -k_\mu\|_{\K_C}\leq e^{-(1-\nu)t} \|k_0 -k_\mu\|_{\K_C}.
\end{equation}
\end{theorem}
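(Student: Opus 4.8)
The plan is to write $k_t-k_\mu=\hat{T}^{\odot\a}(t)(k_0-k_\mu)$, to observe that $k_0-k_\mu$ belongs to $\K_\a^0$, and then to transfer the contraction estimate of Proposition~\ref{propergod} from the approximants $\hat{P}_\delta^\ast$ to the limiting semigroup using the approximation theorem at the end of Section~\ref{dualconstraction}.

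First I would verify that $k_\mu$ is a stationary point of the evolution lying in $\overline{\K_{\a C}}$. Since $\phi\ge0$, the Gibbs correlation function (which exists under \eqref{LAHT}) obeys a Ruelle bound $k_\mu(\eta)\le\xi^{|\eta|}$ with $\xi<C$ (for instance $\xi=z$, recalling that $z\le\nu Ce^{-CC_\phi}<C$). Any $k$ with $|k(\eta)|\le\xi^{|\eta|}$ and $\xi<C$ is the $\|\cdot\|_{\K_C}$-limit of its cardinality-truncations $k\1_{\{|\cdot|\le n\}}$, each of which lies in $\K_{\a C}$; hence $k_\mu\in\overline{\K_{\a C}}$. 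For stationarity I would use the reversibility of $\mu$: combining \eqref{invGibbs} with the intertwining $LK=K\hat{L}$ and \eqref{eqmeans} gives $\int_{\Ga_0}(\hat{L}G)(\eta)\,k_\mu(\eta)\,d\lambda(\eta)=\int_\Ga LKG\,d\mu=0$ for every $G\in\Bbs(\Ga_0)$; since $\Bbs(\Ga_0)$ is a core for $\hat{L}$ in $\L_C$, this forces $k_\mu\in D(\hat{L}^\ast)$ with $\hat{L}^\ast k_\mu=0$, whence, by \eqref{domAdjtimes} and $k_\mu\in\overline{\K_{\a C}}$, also $k_\mu\in D(\hat{L}^{\odot\a})$ with $\hat{L}^{\odot\a}k_\mu=0$, and therefore $\hat{T}^{\odot\a}(t)k_\mu=k_\mu$ for all $t\ge0$. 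Finally $k_\mu(\emptyset)=1=k_0(\emptyset)$, so $k_0-k_\mu$ lies in $\overline{\K_{\a C}}$ and vanishes at $\emptyset$, i.e.\ $k_0-k_\mu\in\K_\a^0$.

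Next I would iterate. By \eqref{apprfordual}, $\hat{P}_\delta^\ast$ preserves the vanishing at $\emptyset$ (as already noted just before Proposition~\ref{propergod}) and it maps $\overline{\K_{\a C}}$ into itself (cf.\ the proof of Proposition~\ref{invariantspace} and the approximation theorem), so $\hat{P}_\delta^\ast(\K_\a^0)\subset\K_\a^0$ and Proposition~\ref{propergod} gives, for each $\delta\in(0;1)$,
\[
\bigl\|(\hat{P}_\delta^\ast)^{[t/\delta]}(k_0-k_\mu)\bigr\|_{\K_C}\le\bigl(1-(1-\nu)\delta\bigr)^{[t/\delta]}\|k_0-k_\mu\|_{\K_C}.
\]
Letting $\delta\downarrow0$, the approximation theorem (applied to $k_0-k_\mu\in\overline{\K_{\a C}}$), linearity, and $\hat{T}^{\odot\a}(t)k_\mu=k_\mu$ make the left-hand side converge to $\|\hat{T}^{\odot\a}(t)(k_0-k_\mu)\|_{\K_C}=\|k_t-k_\mu\|_{\K_C}$, while the bounds $t/\delta-1<[t/\delta]\le t/\delta$ together with $0<(1-\nu)\delta<1$ squeeze $\bigl(1-(1-\nu)\delta\bigr)^{[t/\delta]}$ between $\bigl(1-(1-\nu)\delta\bigr)^{t/\delta}$ and $\bigl(1-(1-\nu)\delta\bigr)^{t/\delta-1}$, both of which tend to $e^{-(1-\nu)t}$. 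This yields \eqref{ergodineq}.

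The step I expect to need the most care is the first one. Passing from the reversibility identity \eqref{invGibbs}, which is given only on $K\Bbs(\Ga_0)$, to the operator equation $\hat{L}^\ast k_\mu=0$ requires a density/core argument for $\hat{L}$ in $\L_C$; and one must separately confirm that the Gibbs correlation function $k_\mu$ indeed lies in $\overline{\K_{\a C}}$ (this is where the Ruelle bound and the smallness of $z$ are used). Once $k_0-k_\mu\in\K_\a^0$ is in hand, the remaining two steps — iterating Proposition~\ref{propergod} and taking $\delta\downarrow0$ — are routine.
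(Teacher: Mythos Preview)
Your proposal is correct and follows essentially the same route as the paper: show $k_\mu$ is a stationary point in $\overline{\K_{\a C}}$, reduce to $r_0:=k_0-k_\mu\in\K_\a^0$, iterate Proposition~\ref{propergod}, and pass to the limit via the approximation theorem. The only noteworthy variation is that the paper places $k_\mu$ directly in $\K_{\a C}$ via the bound $k_\mu(\eta)\le(\a C)^{|\eta|}$ (a consequence of $z\le\a C e^{-\a C C_\phi}$ from \eqref{ineq-alpha}), whereas you use the cruder Ruelle bound $k_\mu(\eta)\le z^{|\eta|}$ together with a truncation argument to reach the closure $\overline{\K_{\a C}}$; both are fine, and the paper's version immediately gives $k_\mu\in D(\hat L^\ast)$ by Proposition~\ref{domain-adj}, so $\hat L^\ast k_\mu=0$ follows from the density of $\Bbs(\Ga_0)$ in $\L_C$ rather than from a core argument.
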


\begin{proof}
First of all, let us note that for any $\a\in(\a_0;1)$ the
inequality \eqref{ineq-alpha} implies $z\leq\a C \exp\{-\aC
C_\phi\}$. Hence $k_\mu(\eta)\leq (\aC)^{|\eta|}$, $\eta\in\Ga_0$.
Therefore, $k_\mu\in\K_\aC\subset\Ka\cap D(\hat{L}^\ast)$. By
\eqref{invGibbs}, for any $G\in \Bbs(\Ga_0)$ we have
$\langle\!\langle\hat{L}G, k_\mu\rangle\!\rangle=0$. It means that
$\hat{L}^\ast k_\mu = 0$. Therefore, ${\hat{L}}^{\odot\a}k_\mu=0$.
As a result,
 ${\hat{T}} ^{\odot\a}(t)k_\mu=k_\mu$.
Let $r_0=k_0-k_\mu\in\Ka$. Then $r_0\in\K_a^0$ and
\begin{align*}
&\|k_t -k_\mu\|_{\K_C}= \bigl\| {\hat{T}} ^{\odot\a}(t) r_0\bigr\|_{\K_C}
\\&\leq \Bigl\| \bigl(\hat{P}_{\delta
}^{\ast
}\bigr)^{\left[\frac{t}{\delta}\right]} r_0\Bigr\|_{\K_C}+\Bigl\| {\hat{T}} ^{\odot\a}(t) r_0- \bigl(\hat{P}_{\delta
}^{\ast
}\bigr)^{\left[\frac{t}{\delta}\right]} r_0\Bigr\|_{\K_C}\\
&\leq \Bigl\| \hat{P}_{\delta
}^{\ast
}\upharpoonright_{\K_\a^{0}}\Bigr\|^{\left[\frac{t}{\delta}\right]} \cdot \|r_0\|_{\K_C}+\Bigl\| {\hat{T}} ^{\odot\a}(t) r_0- \bigl(\hat{P}_{\delta
}^{\ast
}\bigr)^{\left[\frac{t}{\delta}\right]} r_0\Bigr\|_{\K_C}\\
&\leq \bigl( 1 -(1-\nu)\delta \bigr)^{\frac{t}{\delta}-1} \|r_0\|_{\K_C}+\Bigl\| {\hat{T}} ^{\odot\a}(t) r_0- \bigl(\hat{P}_{\delta
}^{\ast
}\bigr)^{\left[\frac{t}{\delta}\right]} r_0\Bigr\|_{\K_C},
\end{align*}
since $0<1-(1-\nu)\delta<1$ and
$\frac{t}{\delta}<\bigl[\frac{t}{\delta}\bigr]+1$. Taking the limit
as $\delta\downarrow 0$ in the right hand side of this inequality we
obtain \eqref{ergodineq}.
\end{proof}

\begin{acknowledgements}
The financial support of DFG through the SFB 701 (Bielefeld
University) and German-Ukrainian Project 436 UKR 113/80 and 436 UKR
113/94 is gratefully acknowledged. This work was partially supported by the Marie Curie "Transfer
of Knowledge" program, project TODEQ (Warsaw, IMPAN). O.K. is very
thankful to Prof. J. Zemanek for fruitful and stimulating discussions.
\end{acknowledgements}

\end{document}